\definecolor{blue(pigment)}{rgb}{0.2, 0.2, 0.6}
\definecolor{coolblack}{rgb}{0.0, 0.18, 0.39}
\definecolor{darkblue}{rgb}{0.0, 0.0, 0.55}
\newcommand{\I}{i} 
\newcommand{\nth}{\textsuperscript{th}}
\newcommand{\dgr[1]}{#1^{\circ}}
\newcommand{\Csq}{$\left| \mathcal{C} \right|^2~$}
\newcommand{\Cesq}{$\left| \mathcal{C}_{\vec{e}} \right|^2~$}
\renewcommand{\Tr}{\text{Tr}\!}
\newcommand{\identity}{\hat{\mathbb{1}}}
\newcommand{\vac}{\ket{0}_{\text{vac}}}
\newtheorem{theorem}{Theorem}
\newtheorem{lemma}{Lemma}
\author{Maciej Kościelski}
\title{Dynamical Bose-Hubbard model for entanglement generation and storing}
\date{April 2021}
\keywords{
	Bose-Hubbard model, GHZ state, optical lattice, entanglement, Mott insulator phase, superfluid phase, cold atoms
	\begin{center}
		\bfseries\large Słowa kluczowe
	\end{center}
	model Bosego-Hubbarda, stan GHZ, sieć optyczna, splątanie, faza izolatora Motta, faza nadciekła, zimne atomy
}
\begin{document}

	\maketitle
	\let\cleardoublepage\clearpage

	\begin{abstract} 

		This work presents a theoretical study of a protocol for dynamical generation and storage of the durable, highly entangled Greenberger-Horne-Zeilinger (GHZ) state in a system composed of bosonic atoms loaded into a one-dimensional optical lattice potential. A method of indicating entanglement in the system is also presented.
		
		The system ground-state can be either in the superfluid phase or in the Mott insulator phase. The wave functions of atoms in the superfluid phase are delocalised over the whole lattice and overlap. In the Mott phase, the wave functions are localised around lattice sites. The GHZ state is being generated in the superfluid phase and stored in the Mott insulator phase. It is achieved by a linear change of the potential depth in an optical lattice filled with atoms of two species. The numerical method used to describe the system is based on the exact diagonalisation of the Bose-Hubbard Hamiltonian. A quantum correlator indicating the level of multi-mode entanglement is introduced. Finally, it is shown that the value of the correlator indicates generation of the GHZ state. The appearance of the GHZ state is confirmed by the numerical representation of the state.

		\vspace*{10pt}
		
		\begin{center}
			\bfseries\large Streszczenie
		\end{center}

		Niniejsza praca przedstawia teoretyczną analizę protokołu do wytwarzania i przechowywania trwałego, maksymalnie splątanego stanu Greenbergera-Horne'a-Zeilingera (GHZ) w układzie złożonym z bozonów w jednowymiarowej sieci optycznej. Przedstawiona jest również metoda wykrywania splątania w układzie.

		Stan podstawowy układu może występować w fazie nadciekłej albo w fazie izolatora Motta. W przypadku fazy nadciekłej funkcje falowe opisujące atomy są zdelokalizowane w ramach całej sieci i mogą się przekrywać. W fazie Motta funkcje falowe są zlokalizowane wokół oczek sieci. Stan GHZ jest wytwarzany w fazie nadciekłej, a przechowywany w fazie izolatora Motta. Jest to osiągnięte poprzez liniową zmianę głębokości potencjału pułapkującego sieci optycznej wypełnionej bozonowymi cząstkami dwóch rodzajów. Użyta metoda numeryczna opiera się na ścisłej diagonalizacji hamiltonianu Bosego-Hubbarda. Wprowadzony zostaje korelator kwantowy służący do pomiaru poziomu splątania. Wartość korelatora wskazuje na występowanie stanu GHZ. Jego wytworzenie potwierdza również analiza numerycznej reprezentacji uzyskanego stanu.

	\end{abstract}

	\section*{Acknowledgements}
    \thispagestyle{empty}

    I would like to express my gratitude to the people who have supported me throughout my academic endeavours.
    
    My supervisors: dr hab. Emilia Witkowska, prof. IP PAS and dr hab. Jan Chwedeńczuk -- thank you for the initial idea of this work, the valuable suggestions on the thesis, and everything I have learnt from you.
    
    Everyone who gave me a consultation on writing and proofreading in English -- thank you for your scrupulosity and patience.
    
    I would also like to extend my sincere thanks to everyone who has supported me emotionally during the writing of the thesis.

    \clearpage

	\tableofcontents\thispagestyle{empty}
	
	\clearpage\pagenumbering{arabic}

	\chapter{Motivation and relation to other work}
		
	The aim of this work is to present a method for generating and storing an entangled state in a system composed of ultracold atoms loaded into an optical lattice. The considered system and protocol can be realised experimentally with nowadays techniques \cite{Gross995, RevModPhys.80.885}. The method can be exploit for experimental generation of entanglement in the many-body system. It can also be used as a framework for theoretical studies of multi-mode entanglement.
	
	The main concept underlying this work is entanglement \cite{entanglement_review}. Discussion of this phenomenon starts with the groundbreaking paper by Albert Einstein, Boris Podolsky and Nathan Rosen from 1935 \cite{EPR}. In that paper, the authors showed that in quantum theory a measurement conducted on one particle can give a certain information about the result of some measurement on another particle. They proven that this is independent of the interval separating the particles, what implies a non-local correlation between observables. In the same year, another paper was published by Erwin Schr\"odinger where entanglement has been discussed as a general phenomenon of quantum theory \cite{entanglement_schr}.

	The phenomenon of entanglement can be understood in the following way: A group of quantum-mechanical objects are in entangled state if one can not describe behaviour of each of them separately, but rather has to consider the state of the whole system in order not to lose any information about the behaviour of its parts. Entanglement has already been proven to be of great importance in many areas like quantum information \cite{QI_book, QI_rev}, quantum cryptography \cite{cryptography} and other branches of quantum theory. Another important field that exploits entanglement is quantum metrology, the idea of which is to perform precise measurements of physical quantities, for example in quantum interferometry \cite{interfer1, interfer2, interfer3, interferReview, ent_and_interfer}.
	Entanglement is a distinctive feature of quantum theory. It gives plenty of possibilities to experimentally test the advantage of quantum theory over the classical theory in describing physical phenomena \cite{quantum_tests1, quantum_tests2}.
	The entangled states can be produced using many techniques: in quantum optics (with photons \cite{photonic_QI}), solid state physics \cite{solid-state_qubits, quantum_dot} or cold matter physics (using ions or neutral gases of ultracold atoms \cite{ions, ultracold_gases}).

	One of the most extensively exploited entangled states is the Greenberger-Horne-Zeilinger (GHZ) state, which is the maximally entangled state. It was first proposed in \cite{GHZ_paper} to test the fundamental issue of non-locality in quantum theory. For a system of $M$ qubits (each qubit from orthonormal basis $\left\{ \ket{0}, \ket{1} \right\}$) the GHZ state is defined as
	\begin{equation} \label{GHZ}
	\ket{\text{GHZ}} := \frac{\ket{0}^{\otimes M}+\ket{1}^{\otimes M}}{\sqrt{2}}.
	\end{equation}
	This state has been found useful in quantum information theory \cite{Gao_2004, Xia_2006}. As a superposition of two macroscopic states it is fragile to decoherence, and thus, difficult to be produced in laboratory, in particular for a large number of qubits. The first successful attempt to produce it was made in 1999 with use of three photons \cite{GHZ-produced}. The state-of-the-art experiments achieve a GHZ state composed of up to 27 qubits \cite{GHZ27}. This work, following \cite{paper}, proposes a protocol allowing to create the GHZ state using an arbitrary number of bosonic atoms in optical lattice.
	
	This work focuses on a system composed of bosonic atoms loaded into an optical lattice potential. The ultracold regime is considered so that the Bose-Hubbard model can be used as an approximation giving sufficient description of the system \cite{BH_model, phaseDiagram_BH}. The system can be observed in two distinct thermodynamic phases \cite{phaseDiagram_BH, Mott_SF_in_BH} -- the superfluid phase characterised by the presence of Bose-Einstein condensate \cite{BEC} and the Mott insulator phase characterised by the fact that all particles are well-localised at the lattice sites.
	
	This thesis is motivated by the results of two papers \cite{paper, Chwe20}. In \cite{paper} the authors propose a protocol for generation and storage spin-squeezed states in the Mott phase. Additionally, they show that the protocol allows one to generate the GHZ state and store it in the optical lattice. The idea of this thesis is to characterise and prove that indeed the highly entangled state can be generated and stored in the Mott insulator phase using the above method. To quantify the level of entanglement in this thesis a generalised notion of the correlator defined in \cite{Chwe20} is used. The generalisation bases on the theorems \refsec[the:Theorem1] and \refsec[the:MaxC], which are proved in appendices. The value of the correlator is bounded from above for separable (non-entangled) states. Breaking the bound proves that the state is entangled \cite{Chwe20}.
	
	It was also shown that the correlator quantifies the level of entanglement in the system \cite{Chwe20}.

	\chapter{Description of the problem} 

	In quantum statistical physics, there are two classes of particles -- fermions and bosons. Fermionic particles (\newterm[fermions]) are characterised by the property that any wave function describing a set of them is anti-symmetric due to exchange of the particles. This leads to an important fact that only one fermion can occupy a state described by specific quantum number. The other type of particles -- \newterm[bosons] -- is characterised by the symmetry of the wave function due to exchange of particles. This gives a possibility of arbitrary many bosons taking the exact same quantum state.

	This work examines bosonic particles at ultra-low temperatures of the order of hundreds of nanokelvins. A gas composed of bosons at such low temperatures exhibits a phase transition to the so-called Bose-Einstein condensate. This phenomenon was theoretically predicted in 1924 by Satyendra Nath Bose \cite{bose1924} and Albert Einstein \cite{einstein1924}. It was experimentally realised for the first time in 1995 by Eric Cornell, Carl Wieman et al. \cite{BECanderson1995} and, independently, in the same year by Wolfgang Ketterle et al. \cite{BECketterle1995}. In 2001, the experiments were honoured with the Nobel prize in physics \cite{Nobel2001}.
	
	In a bosonic gas below the critical temperature $T_c$ (usually $T_c \sim 100$~nK for atoms) a macroscopic number of particles tends to occupy the lowest energy state. The macroscopic quantum state composed of the bosons occupying the ground state is called the \newterm[Bose-Einstein condensate] (BEC). The amount of BEC in a gas can be expressed by a notion of the \newterm[condensate fraction]
	\begin{equation}
	    f_c := \frac{\braket{n_0}}{N},
	\end{equation}
	where $\braket{n_0}$ is the average number of particles in the lowest energy state, and $N$ is the total number of particles in the gas. A gas with $f_c\approx1$ is called pure condensate.

	All the particles forming BEC are indistinguishable. It allows to apply a quantum operation to a large set of particles simultaneously. For this reason, using cold gases, one has more potentiality in manipulating many-body quantum systems than using solid states. Thanks to those properties, a whole branch of research projects exploiting BEC has emerged. They include investigations of interference of two condensates \cite{andrews1997interference, anderson1998interfer}, applications of BEC in quantum metrology \cite{interferReview, cronin2009optics, wasak2018bell, BECclock}, producing solitons \cite{solitons, perez1998bose, khaykovich2002soliton}, Feshbach resonances \cite{feshbach} and squeezed or entangled states \cite{interferReview, paper, orzel2001squeezed, hamley2012spin} in Bose-Einstein condensate. Experiments on the generation of BEC are being conducted all over the world including KL FAMO laboratory in Toruń \cite{FAMO} and even in space \cite{inspaceHNV, ISS}.

	This research investigates an ultracold gas of bosonic atoms, which may exhibit a Bose-Einstein condensation. It will be referred to when discussing the condensate fraction in the further sections. It is, however, worth mentioning that BEC is not the key aspect of the study. This work examines bosonic atoms loaded into an optical lattice, which might have properties of BEC in some regime of parameters what will be discussed in Section~\refsec[sec:phaseDiagram]. In other ranges of parameters, the system is in the Mott insulator phase. 
	In the end it is worth noting that this study investigates a finite number of particles; thus, referring to thermodynamic phenomena such as BEC cannot be considered in a strict sense.

	\section{Optical lattice} 

	The protocol proposed in this thesis relies on optical lattice -- a device, which, using lasers, creates an effective periodic trapping potential for ultracold atoms.

	An optical lattice is made of counter-propagating laser beams. To keep the atoms in the optical lattice potential, they are first cooled down to ultra-low temperatures of the order of hundreds of nanokelvins using various methods of cooling and trapping \cite{cooling_review, cooling&trapping}. The whole setup is closed in a vacuum chamber to isolate the system from the environment to avoid undesirable effects, e.g. scattering with other particles. The atomic gas is also very dilute to prevent many-body scattering events. Those events would lead to emergence of molecules, which would not be trapped by the optical lattice leading to vaporisation of the atoms in a short time.

    \begin{quote}
        \subsection{Potential of the lattice}

    	The counter-propagating laser beams are tuned in such a way that the electro-magnetic field forms a standing wave. Following \cite{Griffiths}, in the one-dimensional case, the electric field $\mathbf{E}$ of the standing wave can be written as
    	\begin{equation}
    	    \mathbf{E}(x,t) = \mathbf{\hat{e}}_\perp \tilde{E}(x) \sin{(\omega_L t)} := \mathbf{\hat{e}}_\perp E_0 \sin{(k_L x)} \sin{(\omega_L t)},
    	\end{equation}
    	with $\mathbf{\hat{e}}_\perp$ -- the unit \newterm[polarisation vector] perpendicular to the $x$-axis, $\tilde{E}(x)$ -- \newterm[field amplitude] in each point $x$ of space, $E_0$ -- amplitude of the electro-magnetic field, $k_L=\sfrac{2 \pi}{\lambda_L}$ \newterm[wavevector] of the monochromatic light of \newterm[wavelength] $\lambda_L$, $\omega_L=\sfrac{2 \pi c}{\lambda_L}$ -- \newterm[angular frequency] of the light and $t$ -- time.
    	
    	If a neutral atom is surrounded by such a field it induces an atomic dipole moment $\mathbf{p}$:
    	\begin{equation}
    	    \mathbf{p}(x,t) = \mathbf{\hat{e}}_\perp \alpha \tilde{E}(x) \sin{(\omega_L t)},
    	\end{equation}
    	where $\alpha$ being called \newterm[complex polarisability] is dependent on the frequency $\omega_L$ of the light \cite{dipole}. Interaction of the atom with the field creates \newterm[dipole potential] $V$:
    	\begin{equation*}
    	    V(x) = - \frac{1}{2} \int_0^{\frac{2 \pi}{\omega_L}} \mathbf{p} \cdot \mathbf{E} ~dt = - \text{Re}(\alpha) \left| \tilde{E} \right|^2 =
    	\end{equation*}
    	\begin{equation}
    	    = - \text{Re}(\alpha) E_0^2 \sin^2\left(\frac{2 \pi}{\lambda_L} x\right),
    	\end{equation}
    	where integral denotes time-averaging over one period of the optical wave \cite{dipole}.
    	
    	It is apparent that two counter-propagating laser beams of wavelength $\lambda_L$ can create a potential with \newterm[spatial period] $d:=\sfrac{\lambda_L}{2}$ for a neutral atom. This periodic structure of the potential with ability to trap atoms is called \newterm[optical lattice]. The following Figure~\ref{fig:wave} presents a sketch of the one-dimensional optical lattice. The red arrows illustrate laser beams while the sinusoidal line the potential created by the standing electro-magnetic wave.
    	\begin{figure}[H]
    	    \centering
    	    \includegraphics[width=0.8\textwidth]{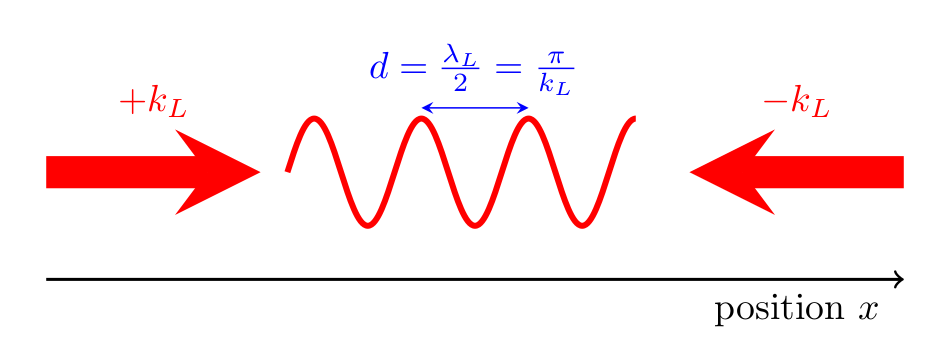}
    	    \caption{A scheme of the one-dimensional optical lattice potential. The red thick arrows symbolise two counter-propagating laser beams of the same wavelength $\lambda_L$ and the opposite wavevectors $\pm k_L$. The standing wave generated by the two laser beams creates a periodic potential for atoms illustrated by the sinusoidal line. The minima of the potential, where the ultracold atoms tend to congregate, are separated by the distance $d=\lambda_L/2$. The minima are called \newterm[lattice sites]. Adapted from \cite{SuperfluidFraction}.}
    	    \label{fig:wave}
    	\end{figure}
	\end{quote}

	What is highly useful is that one can precisely control parameters of the lattice such as geometry of the lattice and its depth during the experiment by changing the frequency of the laser beams, their intensity and the angle between them. Using two counter-propagating laser beams one gets one-dimensional periodic structure of the lattice potential. One can create two- or three-dimensional structures with four or six laser beams (see Figure~\ref{latticeScheme}). The elementary cell of the lattice is cubic when the laser beams are perpendicular. A system of triclinic symmetry can be also created when changing the angles between the laser beams. In the whole lattice there are thousands of atoms, usually few per one site. However, one can very precisely control the number of atoms in each site. The opportunity of changing so many parameters makes optical lattices a basic element of many experiments like the most precise device available to humans -- the optical lattice clock \cite{mcgrew2018atomic, nicholson2015systematic, RevModPhys.83.331, RevModPhys.87.637}.

    \begin{figure}[H]
		\centering
		\includegraphics[width=0.7\textwidth]{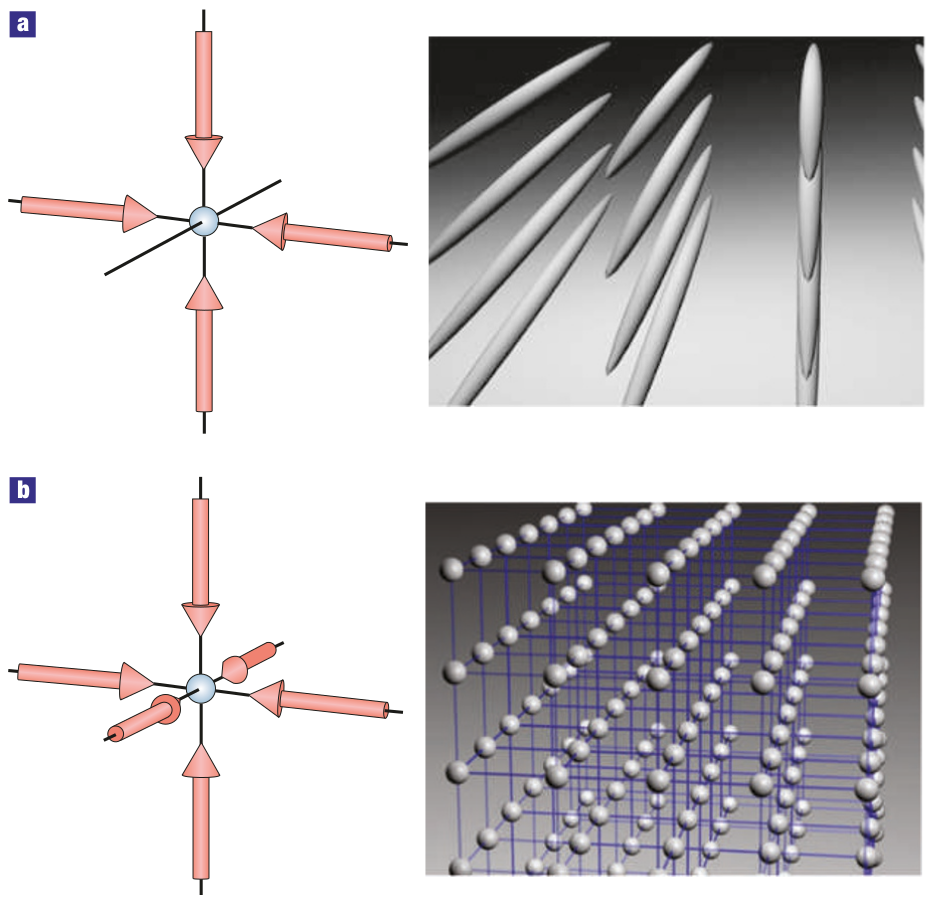}
		\caption{The upper figure (a) symbolises 2-dimensional optical lattice. The left part shows two perpendicular pairs of counter-propagating laser beams (red arrows) shining on a group of atoms (the small ball inside). The left part of the figure presents 2-dimensional periodical structure made of clouds of ultracold atoms generated by the geometry of electro-magnetic field. In the lower figure (b) three perpendicular laser beams (left side) generate 3-dimensional periodicity (right side) resembling a crystal. Reprinted from \cite{opt_latt_Bloch}.}
		\label{latticeScheme}
	\end{figure}

    \subsection{Parameters of the lattice}

	This work models bosonic atoms loaded into the one-dimensional optical lattice potential
	\begin{equation}
	\label{potentialV}
	V(x) = - \text{Re}(\alpha) E_0^2 \sin^2 \left( \frac{2 \pi}{\lambda_L} x \right) =: V_0 \sin^2 \left(  k_L ~x \right),
	\end{equation}
	where $V_{0}:=- \text{Re}(\alpha) E_0^2$ is called \newterm[lattice depth] or \newterm[lattice height], $k_L = 2\pi/\lambda_L$ is the wave number of trapping light, and $d=\lambda_L/2$ is the spatial \newterm[period of the lattice]. For typically used lasers, $\lambda_L\approx800$~nm.

	Using $k_L$ and mass $m$ of a single atom, one can introduce a natural energy scale,
	\begin{equation}
	E_R := \frac{k_L^2 \hbar^2}{2 m} = \frac{ h^2}{2 m \lambda_L^2}.
	\end{equation}
	In further sections the \newterm[recoil energy] $E_R$ is used as the energy unit. This value is typically close to $2.4 \cdot 10^{-30} \text{J}$.

	In the numerical calculations, dimensionless variables like
	\begin{equation}
	\tilde{V}_0 := \frac{V_0}{E_R}
	\end{equation}
	are used. The value of $\tilde{V}_0$ in this study is usually in the range $\tilde{V}_0\in[3, 45]$.

	This thesis examines $N=6$ atoms on $M=6$ lattice sites, it also assumes periodic boundary conditions. The system can be thus modelled using the Bose-Hubbard model \cite{opt_latt_Bloch, ions} introduced in the next paragraph.

	\section{The Bose-Hubbard model} \label{sec:BoseHubbardModel}
	
	The dynamics of the optical lattice filled with bosons can be approximately described by the so-called \newterm[Bose-Hubbard model] \cite{ions}. The approximation assumes that the particles are localised near the lattice sites, the tunnelling of the particles between the lattice sites can only occur between the nearest-neighbour sites and that the only possible states are those in the lowest energy band.

	In this work an ultracold gas composed of a few bosonic atoms with repulsive interaction in two internal states $A$ and $B$, loaded into a three-dimensional potential of an optical lattice is considered. It is assumed that tunnelling is only possible along the $x$ direction while transversely the atoms are described by a localised wave function that can be approximated by a Gaussian function with a characteristic length $L_\perp$. The optical lattice potential in the $x$-direction is $V(x)$ \eqref{potentialV}. In the absence of the lattice, the system is homogeneous and the atoms are confined in a \newterm[flat-bottom potential] \cite{PhysRevLett.110.200406}. After reduction of the perpendicular directions, the effective one-dimensional system Hamiltonian reads
	\begin{eqnarray}
		\hat{\mathcal{H}} &=& \sum_{\sigma=A,B} \int dx \left(
		\hat{\Psi}_\sigma^\dagger(x) \hat{h} \hat{\Psi}_\sigma(x) + 
		\frac{g_{\sigma\sigma}}{2} 
		\hat{\Psi}_\sigma^{2\dagger}(x) \hat{\Psi}_\sigma^{2}(x) \right) \nonumber \\
		&+& g_{AB} \int dx \: \hat{\Psi}_A^\dagger(x)\hat{\Psi}_B^\dagger(x)\hat{\Psi}_B(x)\hat{\Psi}_A(x)
		\label{eq:Hamiltonian}
	\end{eqnarray}
	with the single-particle Hamiltonian
	\begin{equation}\label{eq:single_particle_hamiltonian}
		\hat{h} = -\frac{\hbar^2}{2 m} \frac{d^2}{dx^2} + V(x)
	\end{equation}
	and $\hat{\Psi}_\sigma(x)$ -- the field operator annihilating a boson in state $\sigma$ at position $x$. The interaction coefficients $g_{\sigma\sigma'}=g^{3D}_{\sigma\sigma'}/(2 \pi L_\perp^2)$ are determined by the transverse confinement length $L_\perp$ and by the coupling constants in three dimensions $g^{3D}_{\sigma\sigma'}=4\pi \hbar^2 a_{\sigma\sigma'}/m$, where $a_{\sigma\sigma'}$ is the $s$-wave scattering length for one atom in the state $\sigma$ and the other one in $\sigma'$, $m$ is the atomic mass and $\hbar$ is the Planck constant. This study assumes repulsive interaction between atoms in the two states, with $A \leftrightarrow B$ symmetry leading to $g_{AA} = g_{BB}$, and an adjustable inter-species coupling $g_{AB}$, restricting to the phase-mixing regime: $g_{AB}<g_{AA}$ \cite{PhysRevLett.81.5718}.

	The motion of a particle in a periodic potential formed by an optical lattice can be conveniently described by the band theory, and the system Hamiltonian \eqref{eq:Hamiltonian} can be considered in the basis of Bloch functions $\psi_{l,q}(x, t)$~\cite{RevModPhys.80.885}. These are eigenfunctions of the single-particle Hamiltonian \eqref{eq:single_particle_hamiltonian} and posses required translational properties. The Bloch functions can be constructed numerically in the plane wave basis as described in \cite{paper}. As the next step, the Hamiltonian is rewritten in the basis of Wannier functions $w(x-x_i,t)$ localised around lattice sites, where $x_i$ denotes position of the $i\nth$ site, in the lowest energy band. The Wannier functions can be conveniently constructed from the Bloch states $\psi_{l=1,q}(x,t)$ in the following way
		\begin{equation}\label{eq:wannier_bloch}
		w(x-x_i,t) = \left( \frac{d}{2 \pi} \right) ^{1/2} \int\limits_{q \in 1BZ} {\rm d}q \, e^{-i q x_i } \psi_{l=1,q}(x,t).
		\end{equation}
		Summation in Equation~\eqref{eq:wannier_bloch} extends over wave vectors belonging to the 1\textsuperscript{st} \newterm[Brillouin zone] (1BZ), $-k_L< q \leq k_L$. The two-component Hamiltonian is then
		\begin{align}\label{eq:ham}
		\hat{\mathcal{H}} &= -\sum\limits_{i, j} J(i-j) \left(\hat{a}_{i}^{\dagger}\hat{a}_{j} + \hat{b}_{i}^{\dagger}\hat{b}_{j}\right) \nonumber\\
		+ & \frac{1}{2}\sum\limits_{i,j,k,l} U^{AA}_{i,j,k,l} \hat{a}^\dagger_i \hat{a}^\dagger_j \hat{a}_k \hat{a}_l 
		+ \frac{1}{2}\sum\limits_{i,j,k,l} U^{BB}_{i,j,k,l} \hat{b}^\dagger_i \hat{b}^\dagger_j \hat{b}_k \hat{b}_l \nonumber\\
		+ & \sum \limits_{i,j,k,l} U^{AB}_{i,j,k,l} \hat{a}^\dagger_i \hat{b}^\dagger_j \hat{a}_k \hat{b}_l ,
		\end{align}
		where $\hat{a}_i^{\dagger}$ (or $\hat{b}_i^{\dagger}$) creates a particle in the single-particle Wannier state $w(x-x_i,t)$ of the lowest energy band ($l=1$) localised on the $i\nth$ site, in the internal state $A$ \, (or $B$ respectively). The commutation relations between those operators are presented in the equations \eqref{eq:CommutationRelations}.
		
		The Bose-Hubbard model considers only states in the lowest energy band, which is justified as long as the excitation energies to the higher bands are much larger than the energies involved in the system dynamics. In general, if the lattice height $V_{0}(t)$ is being varied in time, the Wannier functions, and hence the hopping and interaction parameters depend on time
		\begin{subequations}\label{eq:hopping_inter}
			\begin{align}
			J(i-j) & = - \frac{d}{2\pi} \int\limits_{q \in 1BZ} {\rm d} q \, E_q e^{-i (i-j)d q}\, \label{eq:tunnelling}\\
			U^{\sigma\sigma'}_{i,j,k,l} & = g_{\sigma\sigma'} \int {\rm d}x\, w(i)w(j)w(k)w(l),
			\label{eq:UDefinition}
			\end{align}
		\end{subequations}
		where $w(i)=w(x-x_i)$. 
		Due to the orthonormality of the Wannier functions, the tunnelling terms $J(i-j)$ are independent on the dimension while the interaction parameters are not, as the Wannier function factorises in the respective spatial dimensions.
	
	In the \newterm[tight-binding] limit when the lattice height is larger than the recoil energy $E_R=\hbar^2k_L^2/(2m)$ and the Wannier functions are well-localised around each lattice site, the tunnelling and interaction terms fall off rapidly with the distance $|x_i - x_j|$. By keeping  only the leading terms in \eqref{eq:hopping_inter}, one obtains the Bose-Hubbard model
	\begin{gather} \label{Hamiltonian}
	\hat{\mathcal{H}}_{BH} = \hat{\mathcal{H}}_{hopp} + \hat{\mathcal{H}}_{int},
	\end{gather}
	where
	\begin{subequations}
		\begin{align}
		\label{hopping_term}
		\hat{\mathcal{H}}_{hopp} &= - J\sum_{\left< i,j\right> =1}^{M} \hat{a}_i^{\dagger} \hat{a}_{j}  - J\sum_{\left< i,j\right> =1}^{M} \hat{b}_i^{\dagger} \hat{b}_{j}, \\
		\label{interaction_term}
		\hat{\mathcal{H}}_{int} &= \frac{U_{AA}}{2} \sum_{i=1}^{M} \hat{n}^A_i (\hat{n}^A_i - 1) + \frac{U_{BB}}{2} \sum_{i=1}^{M} \hat{n}^B_i (\hat{n}^B_i - 1) + U_{AB} \sum_{i=1}^{M} \hat{n}^A_i \hat{n}^B_i.
		\end{align}
	\end{subequations}
	{In the above expressions $J:=J(1)$, $U_{\sigma\sigma'}:=U^{\sigma\sigma'}_{0,0,0,0}$ and $\hat{n}^A_i:=\hat{a}_{i}^{\dagger}\hat{a}_{i}$, $\hat{n}^B_i:=\hat{b}_{i}^{\dagger}\hat{b}_{i}$. In the shallow lattice, however, for $V_0 \ll 1$, all the terms in \eqref{eq:ham} have to be taken into account. A schematic illustration of the Bose-Hubbard model is depicted in Figure~\ref{BH_figure}.}

	\begin{figure}[H]
		\centering
		\includegraphics[width=0.8\textwidth]{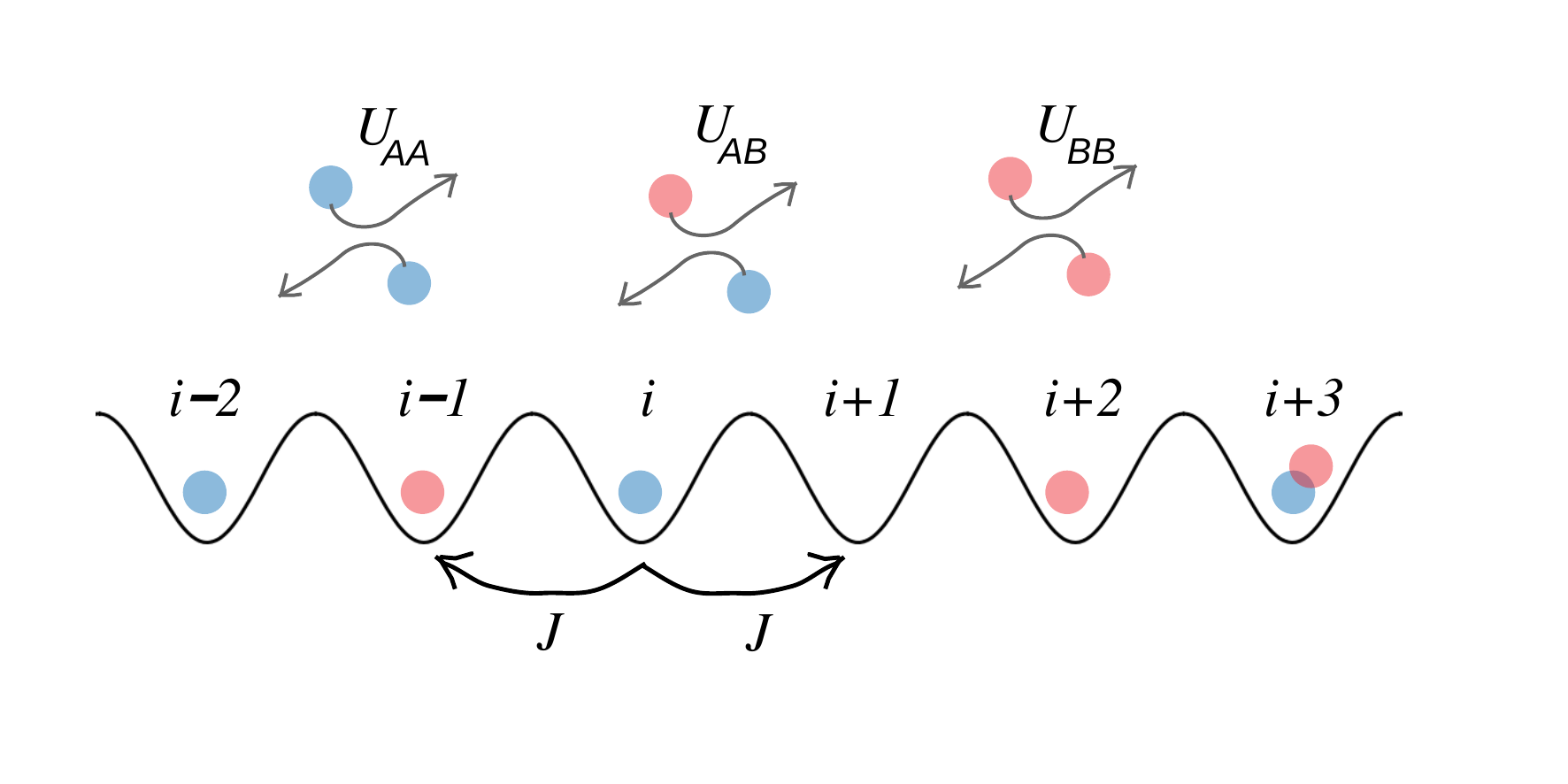}
		\caption{
		{A scheme of the Bose-Hubbard model for $N$ bosonic atoms in two internal states $A$ and $B$ (blue and red) in the ground band of the optical lattice (solid black line). Atoms can tunnel to neighbour sites at the rate $J$ and interact with the strength $U_{AA}, U_{AB}, U_{BB}$. Adapted from \cite{paper}.}}
		\label{BH_figure}
	\end{figure}

	The first and second term in \eqref{Hamiltonian} is called \newterm[hopping term] \eqref{hopping_term} and \newterm[interaction term] \eqref{interaction_term} respectively. The sums in the hopping term go over neighbouring indices ($|i-j|\,=1$). The $\hat{a}_i^{\dagger}$ and $\hat{b}_i^{\dagger}$ operators are \newterm[creation operators] of an atom in states $A$ and $B$ on $i\nth$ of $M$ lattice's sites. They obey commutation relations
	\begin{subequations}
	    \label{eq:CommutationRelations}
		\begin{align}
		[\hat{\alpha}_i,\hat{\beta}^\dagger_j] = \delta_{ij} \delta_{\alpha\beta}, \\
		[\hat{\alpha}^\dagger_i,\hat{\beta}^\dagger_j] = [\hat{\alpha}_i,\hat{\beta}_j] = 0,
		\end{align}
	\end{subequations}
	for $\alpha = a, b$ and $\beta = a, b$. The \newterm[atoms number operators] $\hat{n}_i^\alpha$ are defined as:
	\begin{equation}
	\hat{n}^A_i := \hat{a}_i^{\dagger} \hat{a}_i, ~~~\hat{n}^B_i := \hat{b}_i^{\dagger} \hat{b}_i,
	\end{equation}
	and \newterm[total atoms number operator] $\hat{N}$ is defined as:
	\begin{equation}
	\hat{N} := \hat{N}^{A} + \hat{N}^{B} := \sum_{i=1}^{M} \hat{n}_i^A + \sum_{i=1}^{M} \hat{n}_i^B,
	\end{equation}
	where $\hat{N}^{\alpha}:=\sum_{i=1}^{M} \hat{n}_i^\alpha$ are total atoms number operators of type $\alpha$.
	
	The $U_{AA}, U_{BB}, U_{AB}$ and $J$ parameters depend on $V_0$ -- depth of the lattice potential. A more detailed discussion of the dependence of values $J, U_{AA}, U_{BB}, U_{AB}$ on $V_0$ is presented in Section~\refsec[sec:deepLattice]. In the proposed protocol, parameters of the Bose-Hubbard model are time-dependent, as a change of the lattice height $V_0$ is induced. Changing $V_0$ is necessary to cross the system from the superfluid to the Mott insulator phase. 

	In the simulations, the symmetry between type $A$ and $B$ particles is assumed and:
	\begin{equation}
	U := U_{AA}(V_0) = U_{BB}(V_0) = 0.95^{-1}\, U_{AB}(V_0) \geq 0.
	\end{equation}

	For $V_0$ a linear dependence on time is considered:
	\begin{equation}
	V_0(t) = V_i + (V_f - V_i)\frac{t}{\tau}, 
	\end{equation}
	where the lattice depth $V_0$ changes from $V_i$ to $V_f$ in time $\tau$, which is referred to as \newterm[ramp time].

	\section{The notation}
	
    \subsection{Fock basis}
    \label{sec:FockBasis}

	This work uses the Fock basis $\mathcal{F}$ to describe the system. Any element of the Fock basis of an optical lattice with $M$ lattice sites and $N$ particles of two types ($A$ and $B$) in the considered model can be written in the form
	\begin{equation} \label{eq:FockState}
	\ket{n^A_{1}; n^B_{1}} \otimes \ket{n^A_{2}; n^B_{2}} \otimes \cdots \otimes \ket{n^A_{M}; n^B_{M}} \equiv \ket{n^A_{1}, ..., n^A_{M}; n^B_{1}, ..., n^B_{M}},
	\end{equation}
	where $n^\alpha_i$ is the number of $\alpha$ type particles on $i\nth$ lattice site, so
	\begin{equation}
	\hat{n}^\alpha_i \ket{n^A_{1}, ..., n^A_{M}; n^B_{1}, ..., n^B_{M}} = n^\alpha_i \ket{n^A_{1}, ..., n^A_{M}; n^B_{1}, ..., n^B_{M}},
	\end{equation}
	\begin{equation}
	\sum_{\alpha=A,B} \sum_{i=1}^M n^\alpha_i = \sum_{\alpha=A,B} \! N^\alpha = N
	\end{equation}
	and creation/annihilation operators for type $A$ work so that
	\begin{subequations}
		\begin{align}
		\hat{a}_i \ket{n^A_{1}, ..., n^A_{i}, ..., n^A_{M}; \{ n_i^B \}} = \sqrt{n^A_{i}-1} \ket{n^A_{1}, ..., n^A_{i}-1, ..., n^A_{M}; \{ n_i^B \}},\\
		\hat{a}^\dagger_i \ket{n^A_{1}, ..., n^A_{i}, ..., n^A_{M}; \{ n_i^B \}} = \sqrt{n^A_{i}} \ket{n^A_{1}, ..., n^A_{i}+1, ..., n^A_{M}; \{ n_i^B \}}
		\end{align}
	\end{subequations}
	(where $\{ n_i^B \} = n^B_{1}, ..., n^B_{M}$), analogically for type $B$.
	
	Any state $\ket{\Psi(t)}$ can be then written in the form:
	\begin{equation}
	\ket{\Psi(t)} \equiv \sum_{\psi\in\mathcal{F}} \ket{\psi}\bra{\psi} \ket{\Psi(t)} =: \sum_{\psi\in\mathcal{F}} c_\psi (t) \ket{\psi},
	\end{equation}
	where the complex numbers $c_\psi (t):=\braket{\psi|\Psi(t)}$ are called decomposition coefficients of the state $\ket{\Psi(t)}$ in the Fock basis $\mathcal{F}$.

	Similarly, any operator acting on the whole lattice is also considered in the Fock basis:
	\begin{equation}
	\hat{\mathcal{O}} = \sum_{\psi,\psi'\in\mathcal{F}} \ket{\psi}\bra{\psi} \hat{\mathcal{O}} \ket{\psi'}\bra{\psi'} =: \sum_{\psi,\psi'\in\mathcal{F}} \mathcal{O}_{\psi,\psi'} \ket{\psi}\bra{\psi'}
	\end{equation}
	and the complex numbers $\mathcal{O}_{\psi,\psi'}$ are the elements of the matrix representing the operator $\hat{\mathcal{O}}$ in the Fock basis.

	It is the vector/matrix representation, where states are represented by vectors while operators by matrices. In this way, any operation (like the calculation of an average) can be translated to a numerical multiplication or addition of matrices and vectors.

	\subsection{Single-site operators}
	\label{sec:Notation}

	In this study, quantum states describing multiple lattice sites are examined. Sometimes, the operators used in the work act on the whole state, however, other times they act only on a single lattice. It is thus convenient to introduce the notation below. An operator acting on a single site state is marked by a superscript with parentheses. For instance, $\hat{\mathcal{O}}^{(j)}$ operator acts on a state describing $j\nth$ lattice site only. For any operator $\hat{\mathcal{O}}^{(j)}$ acting on a state describing a single lattice site, one can define an operator $\hat{\mathcal{O}}_j$ acting on a state describing the whole lattice such that
	\begin{equation}
        \hat{\mathcal{O}}_j := \identity^{(1)} \otimes \cdots \otimes \identity^{(j-1)} \otimes \hat{\mathcal{O}}^{(j)} \otimes \identity^{(j+1)} \otimes \cdots \otimes \identity^{(M)}
	\end{equation}
	and $\identity^{(k)}$ are identity operators acting on state describing $k$\nth~ lattice site.
	One can note here that the mathematical form of $\hat{\mathcal{O}}^{(j)}$ operators with different values of $j$ is the same. However, by changing the value of $j$ in the superscript of the operator $\hat{\mathcal{O}}^{(j)}$ one changes the physical interpretation of the operator, because it acts on another lattice site. On the other hand, changing the lower index of $\hat{\mathcal{O}}_j$, changes also the mathematical form of this operator.

	This thesis uses also operators of form:
	\begin{equation}
	    \label{eq:OperatorSum}
        \hat{\mathcal{O}} = \sum_{j=1}^M \hat{\mathcal{O}}_j,
	\end{equation}
	like the spin operators defined in Subsection~\refsec[sec:SpinOperators] or:
	\begin{equation}
	    \label{eq:OperatorProduct}
        \hat{\mathcal{O}} = \bigotimes_{j=1}^M \hat{\mathcal{O}}^{(j)},
	\end{equation}
	like the $\hat{S}_{rot}$ operator defined in Section~\refsec[sec:CorrelatorChoice]. Operators of the structure \eqref{eq:OperatorSum} and \eqref{eq:OperatorProduct} act on the state describing the whole lattice applying the same operation to each of the lattice sites.

	\subsection{Spin operators}
	\label{sec:SpinOperators}
	
	This work utilises the below defined collective spin operators $\hat{S}_x$, $\hat{S}_y$, $\hat{S}_z$:
	\begin{subequations}
	\label{eq:SpinOperators}
	    \begin{align}
	    \hat{S}_x := \sum_{j=1}^{M} \frac{1}{2} (\hat{a}^{\dagger}_j\hat{b}_j + \hat{b}^{\dagger}_j\hat{a}_j)\\
    	\hat{S}_y := \sum_{j=1}^{M} \frac{1}{2i} (\hat{a}^{\dagger}_j\hat{b}_j - \hat{b}^{\dagger}_j\hat{a}_j) \\
    	\hat{S}_z := \sum_{j=1}^{M} \frac{1}{2} (\hat{a}^{\dagger}_j\hat{a}_j - \hat{b}^{\dagger}_j\hat{b}_j).
    	\end{align}
	\end{subequations}
	The operators obey cyclic commutation relations:
	\begin{equation}
	    \left[ \hat{S}_j, \hat{S}_k \right] = \epsilon_{jkl} ~i \hat{S}_l,
	\end{equation}
	with $\epsilon_{jkl}$ being the Levi-Civita symbol.
	
	Note that the average values of $\hat{S}_x$, $\hat{S}_y$, $\hat{S}_z$ operators span a generalised Bloch sphere of radius $\sfrac{N}{2}$. It means that the length of the \newterm[collective spin] $\braket{\overrightarrow{\hat{S}}}$ is
	\begin{equation}
	   \left| \braket{\overrightarrow{\hat{S}}} \right| := \sqrt{\braket{\hat{S}_x}^2+\braket{\hat{S}_y}^2+\braket{\hat{S}_z}^2} = \frac{N}{2}.
	\end{equation}
	The Hamiltonian \eqref{Hamiltonian} conserves the number of particles; thus, the collective spin is also conserved during the evolution.

	\section{Very deep lattices -- harmonic approximation} \label{sec:deepLattice}

	{In the limit of $V_0/E_R \gg 1$, the Wannier functions can be approximated by Gaussian functions whose width is set by the frequency associated with each lattice site minimum~\cite{RevModPhys.80.885}. Therefore, $w(x)\approx \left( \frac{k_L^2}{\pi} \right)^{1/4} \left( \frac{V_{0}}{E_R} \right)^{1/8} e^{-\sqrt{V_{0}} k_L^2 x^2/2}$. This is a sufficient approximation to obtain the interaction coefficient
		\begin{equation}
		\frac{U^{\rm Gauss}_{\sigma\sigma'} (t) }{E_R} \approx \sqrt{\frac{32}{\pi}} \frac{a_{\sigma\sigma'} d}{L_x L_y} \left(\frac{V_{0}}{E_R}\right)^{1/4} . \label{eq:GaussU}
		\end{equation}
		Figure~\ref{UsandJs_plots}~(a) presents the interaction terms calculated exactly from \eqref{eq:UDefinition}, compared with the approximated formula \eqref{eq:GaussU}. It can be easily observed that for $V_0/E_R>1$, the interaction terms beyond the terms involving nearest neighbours can be neglected as compared with the latter ones as expected~\cite{RevModPhys.80.885}.}

\begin{figure}[H]
	\centering
	\includegraphics[width=0.49\textwidth]{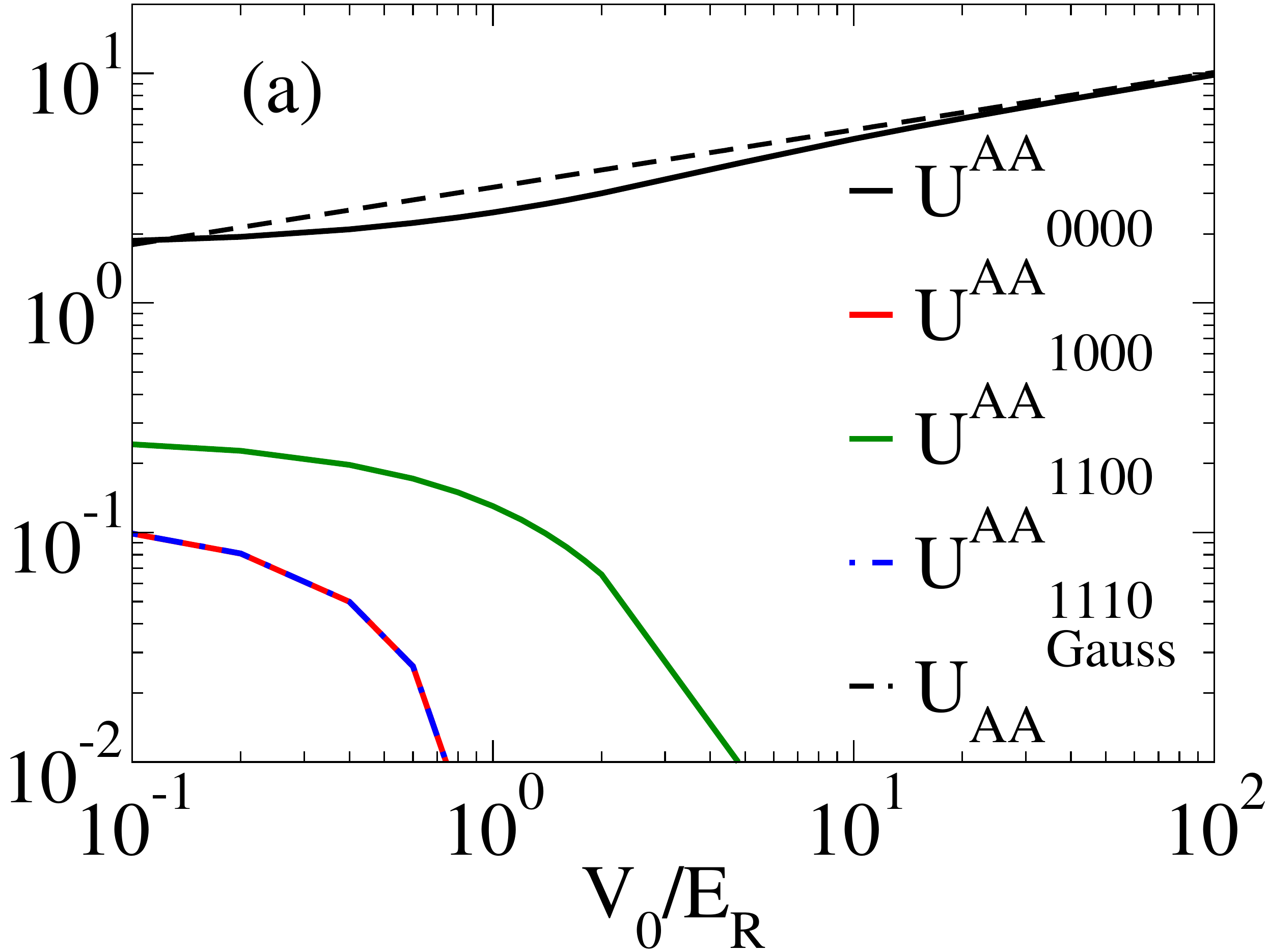}
	\includegraphics[width=0.49\textwidth]{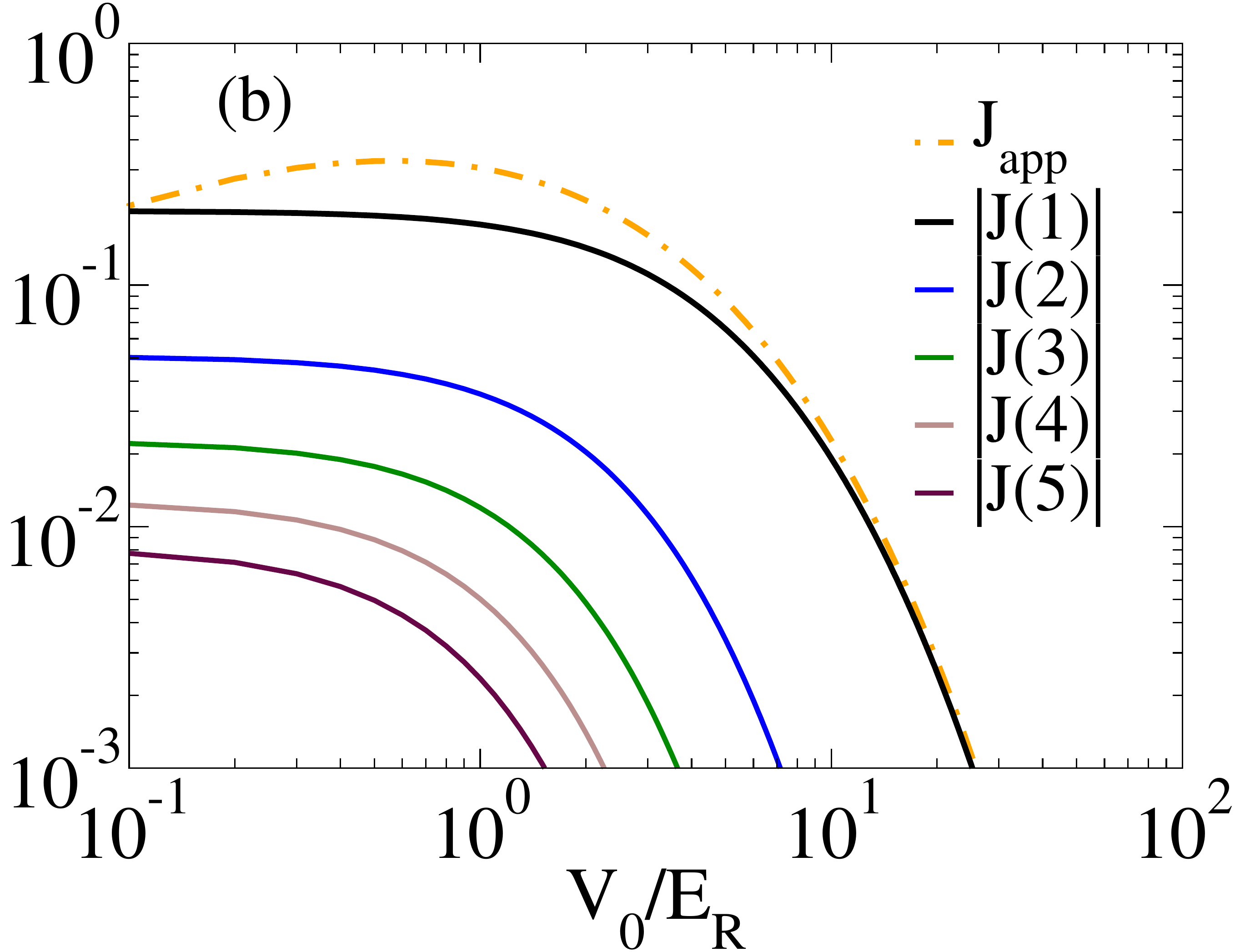}
	\caption{(a) The interaction terms $U^{AA}_{i,j,k,l}/\left[ a_{\sigma\sigma'}d/(L_x L_y) \right]$ \eqref{eq:UDefinition} calculated numerically using the Wannier function for the values of $i,j,k,l$ as indicated in the legend (solid lines). The black dashed line shows the result of the Gaussian approximation for $U^{AA}_{0000}/\left[ a_{\sigma\sigma'}d/(L_x L_y) \right]$~(\ref{eq:GaussU}).
	(b) The tunnelling terms in units of the recoil energy $E_R$ calculated exactly from the energy spectrum of the single-particle Hamiltonian in the case of different distances $i-j$ between sites as indicated in the legend. The approximated result for the nearest neighbour~(\ref{eq:GaussJ}) is shown by the orange dot-dashed line. Reprinted from \cite{paper}.}
	\label{UsandJs_plots}
\end{figure}

	{In the case of the hopping parameters $J(i-j)$, the Gaussian approximation gives a relative error growing with the lattice height~\cite{PhysRevA.79.053623}. The hopping matrix element can be conveniently approximated by the width of the lowest band in the one-dimensional Mathieu equation~\cite{BH_model}, yielding to
	\begin{equation}
		\frac{J_{\rm app}}{E_R} \approx \frac{4}{\sqrt{\pi}} \left(\frac{V_{0,x}}{E_R}\right)^{3/4} e^{-2\sqrt{\frac{V_{0,x}}{E_R}}}, \label{eq:GaussJ}
	\end{equation}
	for the nearest neighbours, i.e. $\left|i-j\right|=1$. 
	Figure~\ref{UsandJs_plots}~(b) shows the tunnelling terms calculated exactly and using the above formula. Indeed, as long as $V_0/E_R>1$, the leading nearest neighbour term dominates.}

	This work operates in the regime $3 < V_0/E_R < 40$; thus, the terms other than $U^{\sigma\sigma'}_{0000} =: U_{\sigma\sigma'}$ and $J(1) =: J$ are omitted. The values used in the numerical calculations are computed from \eqref{eq:UDefinition} and \eqref{eq:tunnelling}. They are presented in Figure~\ref{UandJ_plot}.

\begin{figure}[H]
	\centering
	\includegraphics[width=0.8\textwidth]{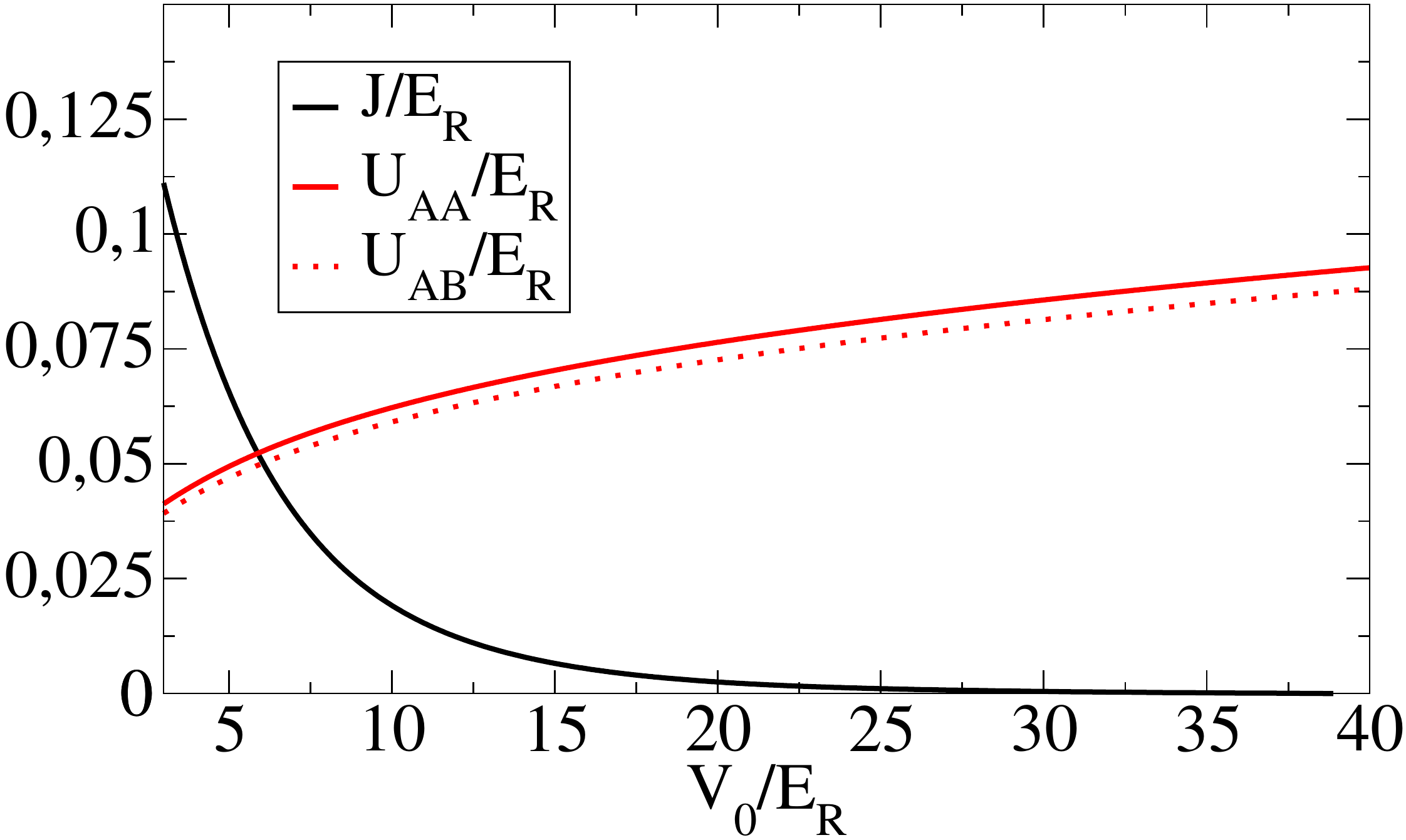}
	\caption{Values of $U_{AA}=U_{BB}$, $U_{AB}$ and $J=J(1)$ in dependence of $V_0$, calculated from \eqref{eq:UDefinition} and \eqref{eq:tunnelling}, used in the simulations.}
	\label{UandJ_plot}
\end{figure}

	\section{Phase diagram of ground states} \label{sec:phaseDiagram}

	The Bose-Hubbard Hamiltonian describes many-body system. One can then investigate its thermodynamic properties. As this thesis deals with systems with finite number of particles, the thermodynamic limit cannot be achieved. However, for simplicity, the terminology from statistical physics is used, because in finite systems there appear analogous phenomena as in thermodynamics of infinite systems.

    The following subsections describe two interesting regimes of the phase diagram of the ground states of the Hamiltonian. Those are Mott insulator regime and superfluid regime.

	In order to calculate the phase diagram at zero temperature, a numerical program to compute the ground states of the Bose-Hubbard Hamiltonian for different numbers of particles per site $\nu:=N/M$ and $U/J$ ratios was prepared. For $M = 8$ sites in the lattice and $U_{AA} = U_{BB}$, $U_{AB} = 0.95 \, U_{AA}$, the variance of particle numbers and condensate fraction were calculated for different values of $J$ and different total numbers of atoms $N$. The results are presented in subsections \refsec[subs:MI] and \refsec[subs:SF].

	\begin{quote}
		\subsection{Mott insulator phase}
		\label{subs:MI}
		
		In the regime, where $J \ll U_{\sigma\sigma'}$, the hopping term becomes negligible. The atoms become localised as probability of hopping between different lattice sites becomes very small.

		The interaction between atoms is repulsive ($U_{\sigma\sigma'} \geq 0$) and the system is translationally invariant; thus, the system tends to occupy the states with homogeneous distribution of atoms over the lattice. The ground states of the system are realised with exactly one atom per site for $N=M$.

		This regime is called the \newterm[Mott insulator phase] \cite{BH_phases}. It is interesting as in this regime, the interaction between atoms at neighbouring sites is suppressed and many-body properties such as entanglement can be ``frozen''.

		To verify if the system is in the Mott insulator phase, the variance of the population of particles in each site is used:
		\begin{equation} \label{var_i}
		\var_i := \Delta^2 (\hat{n}^A_i + \hat{n}^B_i) \equiv \braket{(\hat{n}^A_i + \hat{n}^B_i)^2}-\braket{\hat{n}^A_i + \hat{n}^B_i}^2.
		\end{equation}
		It tends to zero for each site $i$ in the Mott regime. The diagram of $\delta_1$ is plotted in Figure~\ref{phaseDiag-var}.
		
		\begin{figure}[h!]
			\centering
			\includegraphics[width=0.8\textwidth]{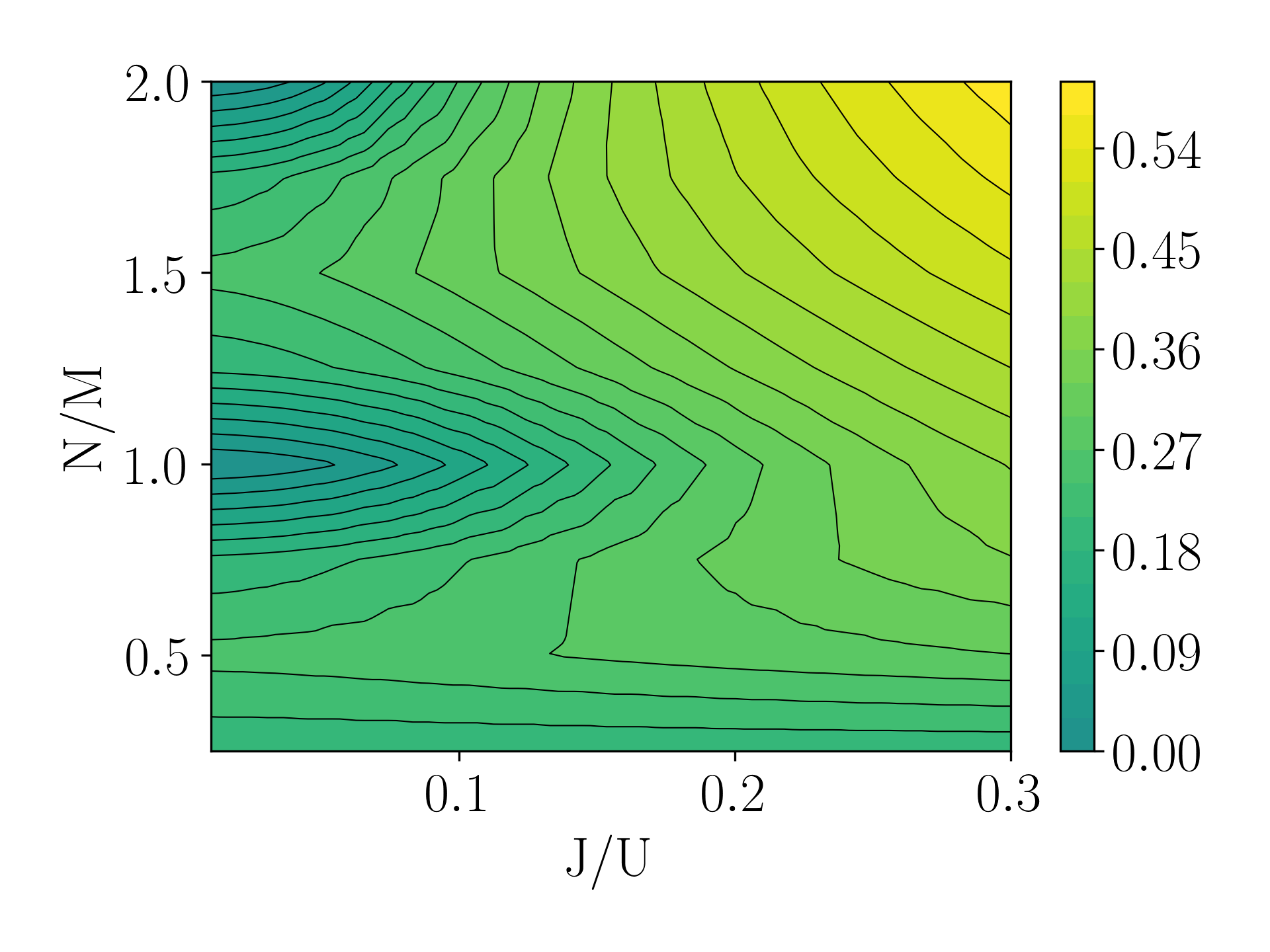}
			\caption{A plot showing interpolated values of $\var_1$ (\ref{var_i}) versus $N/M$ and different values of $J/U_{\sigma\sigma}$. One can observe two spots of Mott insulator regime (characterised by $\var_1\approx0$) for small values of $J/U_{\sigma\sigma}$ and integer number of particles per site $\nu=N/M$. The values of $\var_i$ for different sites $i$ are equal due to the symmetry of the problem.}
			\label{phaseDiag-var}
		\end{figure}

		\subsection{Superfluid phase}
		\label{subs:SF}
		
		The hopping term becomes the most important in the Hamiltonian when $J \gg U_{\sigma\sigma'}$. In this regime the system is \newterm[superfluid] -- the wave functions of atoms are spatially delocalised and atoms can easily interact with each other \cite{BH_phases}.

		This study employs the notion of \newterm[superfluid fraction] to indicate the existence of the superfluid phase. In the case of this work, the superfluid fraction is equal to \newterm[condensate fraction] \cite{SuperfluidFraction}
		\begin{equation} \label{eq:Fraction}
		f_c := \frac{1}{N M} \sum_{i=1}^M \sum_{j=1}^M \sum_{\alpha=a,b} \braket{\hat{\alpha}_i^{\dagger} \hat{\alpha}_j},
		\end{equation}
		which measures the amount of atoms in the zero quasi-momentum mode.
		
		Figure \ref{phaseDiag-frac} shows interpolated values of $f_c$ -- a measure of condensate fraction and superfluid fraction as well -- depending on $J/U_{\sigma\sigma}$ and $N/M$.
		
		\begin{figure}[H]
			\centering
			\includegraphics[width=0.8\textwidth]{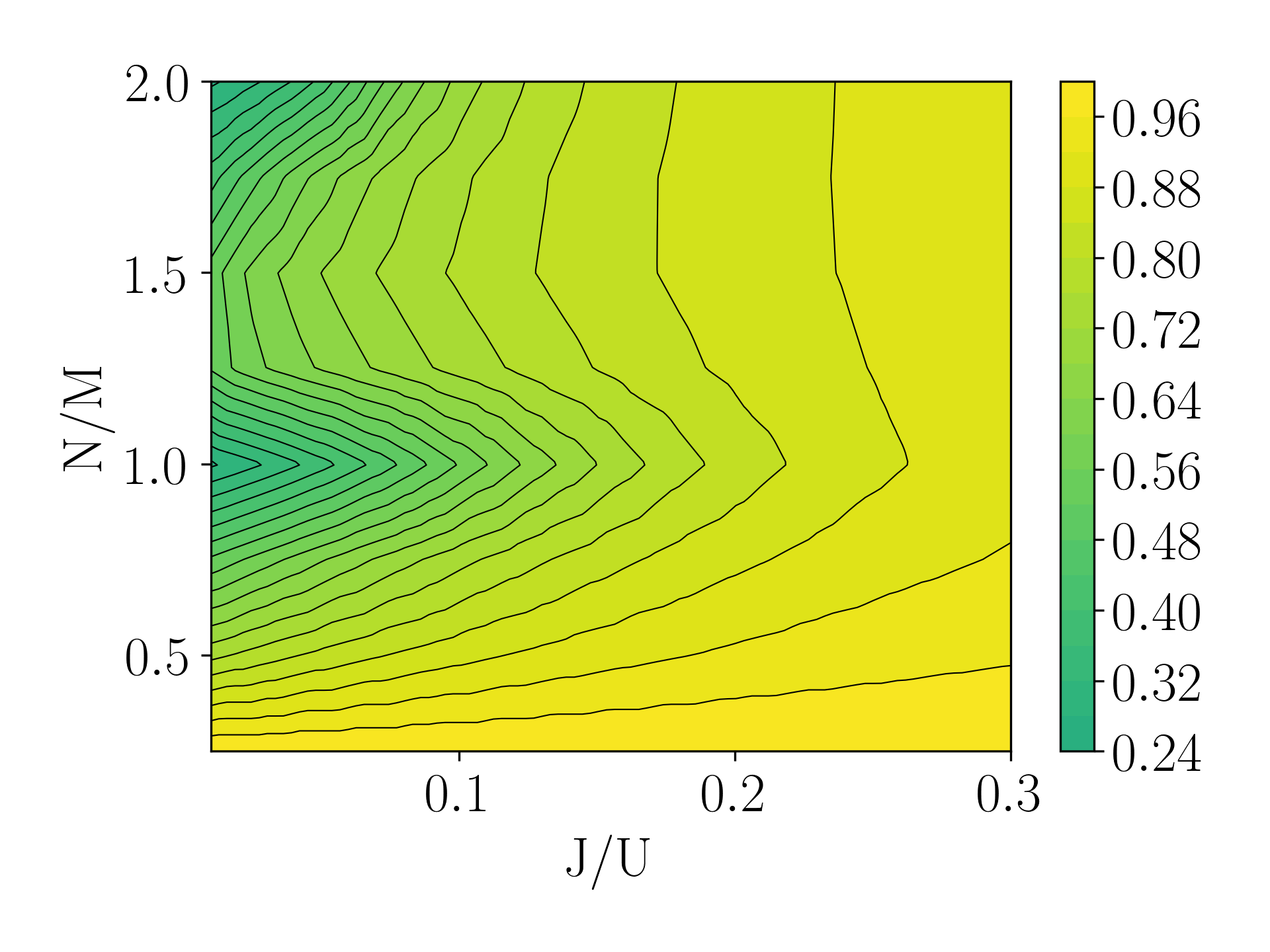}
			\caption{A graph of interpolated values of $f_c$ as a function of the average number of particles per site $N/M$ and the value of $J/U_{\sigma\sigma}$. It is observed that for larger values of $J$, in the region marked by yellow colour, the system is in the superfluid regime as the condensate fraction is significant.}
			\label{phaseDiag-frac}
		\end{figure}

	\end{quote}

	\section{Entanglement} 
    \label{sec:QuantumEntanglement}

	Entanglement is one of the key features of quantum systems. As being unique for quantum mechanics, it witnesses non-classicality of physical systems \cite{Bell}. It can also be used as a resource in quantum information protocols such as dense coding or quantum teleportation \cite{photonic_QI, entanglement_review}. 

	In the linear-algebraic terms of quantum theory, entanglement is a property of a multipartite system, which is described by non-separable density operator. To explain what is non-separability, first definition of \newterm[separability] has to be introduced. A quantum state described by the density operator $\hat{\rho}_{sep}$ is \newterm[separable] if the operator can be written in the form
	\begin{equation}
	\label{separability}
	\hat{\rho}_{sep} = \int d\lambda~p(\lambda) \bigotimes_{j=1}^M \hat{\rho}^{(j)}(\lambda),
	\end{equation}
	where $\hat{\rho}^{(j)}(\lambda)$ is a density operator describing state of $j$\nth ~of $M$ parts of the system and $p(\lambda)$ is probability distribution over some random variable $\lambda$.	If a state cannot be described in a form \eqref{separability}, it is called \newterm[non-separable] or \newterm[entangled].
	
	This work considers entanglement between modes as the parts of the system are lattice sites. If the examined parts of the system were particles, then one would deal with entanglement between particles.

	\subsection{Quantum correlator \Cesq}
	\label{subs:Correlator}
		
	The goal of this thesis is to produce an entangled state; thus, it has to be justified whether created states are entangled. Following the discussion from \cite{Chwe20}, a similar measure of entanglement is used in this work -- the quantum correlator \Cesq defined below:
	\begin{equation} \label{eq:GeneralCorrelator}
	\left| \mathcal{C}_{\vec{e}} \right|^2 := \left| \braket{\bigotimes_{j=1}^M \hat{S}_{\vec{e}}^{(j)}} \right|^2.
	\end{equation}
	The $\hat{S}_{\vec{e}}^{(j)}$ operators are spin operators in $\vec{e}$ direction acting on $j\nth$ lattice site, defined as
	
	\begin{equation}
	\label{eq:SpinEOperator}
	\hat{S}_{\vec{e}}^{(j)} := \vec{e} \cdot \vec{\hat{S}}^{(j)} \equiv \vec{e} \cdot (\hat{S}_x^{(j)}, \hat{S}_y^{(j)}, \hat{S}_z^{(j)})^T,
	\end{equation}
	where $\vec{e}$ is a three-dimensional vector of complex numbers $e_x, e_y, e_z$ (independent of $j$) such that its scalar product with any unit vector of real numbers $\vec{n}$ is no greater than one:
	\begin{equation}
	        \label{eq:VecEConditions}
	        \forall_{ \substack{ \vec{n}:\\ |\vec{n}|=1 }} \left| \vec{e}\cdot \vec{n} \right| \leq 1
	\end{equation}
	(the choice of the above condition is explained in Appendix~\refsec[app:CumbersomeInequality]) and
	\begin{subequations}
	\label{eq:Pauli}
		\begin{align}
		\hat{S}_x^{(j)} := \frac{1}{2} \left( \hat{a}^{{(j)}^\dagger} \hat{b}^{(j)} + \hat{b}^{{(j)}^\dagger} \hat{a}^{(j)} \right), \\
		\hat{S}_y^{(j)} := \frac{1}{2\I} \left( \hat{a}^{{(j)}^\dagger} \hat{b}^{(j)} - \hat{b}^{{(j)}^\dagger} \hat{a}^{(j)} \right), \\
		\hat{S}_z^{(j)} := \frac{1}{2} \left( \hat{a}^{{(j)}^\dagger} \hat{a}^{(j)} - \hat{b}^{{(j)}^\dagger} \hat{b}^{(j)} \right).
		\end{align}
	\end{subequations}

	What is worth mentioning is that the value of \Cesq calculated on a mixed state with $N=M$ particles in $M$ lattice sites is limited by the following bound:
	\begin{equation}
	\left| \mathcal{C}_{\vec{e}} \right|^2 \leq \frac{1}{4},
	\end{equation}
	which is proved in Appendix~\refsec[app:MaxCorrelator].
    
    It has to be noted that the definition \eqref{eq:GeneralCorrelator} of the correlator, represents a whole class of possible correlators, which differ by the direction of $\vec{e}$. Application of the correlator as a witness of entanglement in a given physical problem requires a careful choice of the direction $\vec{e}$. In the context of the Bose-Hubbard model it is shown in Section~\refsec[sec:CorrelatorChoice]. A special case is, e.g. $\left| \mathcal{C}_{+} \right|^2~$ correlator with $\hat{S}_{\vec{e}}^{(j)} = \hat{S}_+^{(j)} := \hat{S}_x^{(j)} + \I \hat{S}_y^{(j)}$, which was used in \cite{Chwe20} and in this work in Section~\refsec[sec:CorrelatorChoice].

	\subsection{\Cesq as a witness of entanglement} \label{subs:Witness}

	In the thesis, states with equal number of particles and lattice sites ($N=M$) are investigated. For these states, Condition \eqref{eq:Bound} necessary for separability of modes is presented and proven as below.
	
	Assuming that a density matrix $\hat{\rho}_{sep}$ describing some system is separable, one can write as follows:
	\begin{dmath}
		\label{eq:BoundInequalities}
		\left| \mathcal{C}_{\vec{e}} \right|^2 \equiv
		\left| \Tr \left[ \hat{\rho}_{sep} \bigotimes_{j=1}^M \hat{S}_{\vec{e}}^{(j)} \right] \right|^2\,\text{=} \overset{\dgr[1]}{=}
		\left| \int d\lambda~ p(\lambda) \prod_{j=1}^M \Tr \left[ \hat{\rho}^{(j)}(\lambda) \hat{S}_{\vec{e}}^{(j)} \right] \right|^2\,\text{$\leq$}  \overset{\dgr[2]}{\leq}
		\int d\lambda~ p(\lambda) \prod_{j=1}^M \left| \Tr \left[ \hat{\rho}^{(j)}(\lambda) \hat{S}_{\vec{e}}^{(j)} \right] \right|^2\,\text{$\leq$} \overset{\dgr[3]}{\leq} 2^{-2M}.
	\end{dmath}
	The above holds since: $\dgr[1]$ the state is separable (\eqref{separability}), $\dgr[2]$ a square of a modulus of mean value of a random variable is no greater than a mean value of square of modulus of the random variable\footnote{For any complex random variable $X$ the following holds: $\braket{\left(|X|-|\braket{X}|\right)^2}\geq0$. As a consequence $|\braket{X}|^2\leq\braket{|X|^2}$ because $\braket{|X|}=|\braket{X}|$.} and $\dgr[3]$ is just a special case of the equation:
	\begin{equation}
	    \int d\lambda~ p(\lambda) \prod_{j=1}^M \left| \Tr\left[ \hat{\rho}^{(j)}(\lambda) \hat{S}_{\vec{e}}^{(j)} \right] \right|^2 \leq \left(\frac{N}{2 M}\right)^{2M}
	\end{equation}
	with $N=M$, which is proved as Theorem~\refsec[the:Theorem1] in Appendix~\refsec[app:CumbersomeInequality].
	
	From the above it is apparent that for any separable state with $N = M$, the value of $\left| \mathcal{C}_{\vec{e}} \right|^2$ is less than $2^{-2M}$ for all directions $\vec{e}$:
	\begin{equation}
	    \label{eq:Bound}
	    ~~~~~~~~~~~~~~~~~~~~~~~~~~~~~~~~~~~~~~~~~~~ \left| \mathcal{C}_{\vec{e}} \right|^2 \leq 2^{-2M} \qquad\text{(for separable states)}.
	\end{equation}
	The value $2^{-2M}$ in Inequality~\eqref{eq:Bound} is called \newterm[separability bound] in this thesis, because breaking it would prove that the state cannot be considered a separable state. It would indicate that there is entanglement between the modes in the system. This work uses $\left| \mathcal{C}_{\vec{e}} \right|^2$ as a measure of entanglement in a multi-mode system. All the states with \Cesq$>2^{-2M}$ are \newterm[entangled] and the states with \Cesq$=\sfrac{1}{4}$ are called \newterm[fully entangled], because $\sfrac{1}{4}$ is the maximum possible value of \Cesq as proved in \cite{Chwe20} for $|\mathcal{C_+}|^2$ and in Appendix~\refsec[app:MaxCorrelator] for any $\vec{e}$. The value 
	
	It is worth noting that the state is entangled if for any $\vec{e}$ occurs \Cesq$>2^{-2M}$. On the other hand, for a given entangled state there may be such vector $\vec{e}$ for which the separability bound \eqref{eq:Bound} is not broken. To prove that the state is entangled one needs to find a proper $\vec{e}$ as done in Section~\refsec[sec:CorrelatorChoice]. Moreover, if one finds that the value of \Cesq is lower than $2^{-2M}$, for all $\vec{e}$ allowed by Condition \eqref{eq:VecEConditions}, it does not have to mean that the state is separable.

	\section{Generating entanglement in optical lattice} \label{sec:HamiltonianIn0q}
    
    This subsection briefly argues why entanglement can be generated in the system studied in this work. To give a better understanding, first, the Hamiltonian of the model is presented in the Fourier space. Then it is shown that the model takes the form of the one-axis twisting model and the consequences of this fact are discussed.
    
    The Bose-Hubbard Hamiltonian \eqref{Hamiltonian} can be written in the momentum representation as follows (see, e.g. \cite{kajtoch2018adiabaticity}):
    \begin{equation*}
        \hat{\mathcal{H}}_{BH} = \sum_q \epsilon_q \hat{\tilde{a}}_q^{\dagger} \hat{\tilde{a}}_q + \sum_q \epsilon_q \hat{\tilde{b}}_q^{\dagger} \hat{\tilde{b}}_q +
    \end{equation*}
    \begin{equation} \label{eq:MomentumHamiltonian}
        + \sum_{q_1, q_2, k} \left[ \frac{U_{AA}}{2M} \hat{\tilde{a}}_{q_1-k}^{\dagger} \hat{\tilde{a}}_{q_2+k}^{\dagger} \hat{\tilde{a}}_{q_1} \hat{\tilde{a}}_{q_2} + \frac{U_{BB}}{2M} \hat{\tilde{b}}_{q_1-k}^{\dagger} \hat{\tilde{b}}_{q_2+k}^{\dagger} \hat{\tilde{b}}_{q_1} \hat{\tilde{b}}_{q_2}  \right] +
    \end{equation}
    \begin{equation*}
        + \sum_{q_1, q_2, k} \frac{U_{AB}}{M} \hat{\tilde{a}}_{q_1-k}^{\dagger} \hat{\tilde{b}}_{q_2+k}^{\dagger} \hat{\tilde{a}}_{q_1} \hat{\tilde{b}}_{q_2},
    \end{equation*}
    where $\hat{\tilde{a}}_q^{\dagger}$ and $\hat{\tilde{b}}_q^{\dagger}$ are creation operators of a particle with quasi-momentum $q$ in state $A$ and $B$ respectively:
    \begin{subequations}
        \begin{align}
            \hat{a}_{i}^{\dagger} &= \frac1{\sqrt{M}} \sum_q e^{- \I q x_i} \hat{\tilde{a}}_q^{\dagger}, \\
            \hat{b}_{i}^{\dagger} &= \frac1{\sqrt{M}} \sum_q e^{- \I q x_i} \hat{\tilde{b}}_q^{\dagger}
        \end{align}
    \end{subequations}
    with $x_i$ being the position of $i\nth$ lattice site and
    \begin{equation}
        \epsilon_q = - 2 J \cos{q d},
    \end{equation}
    where $d$ is the spatial distance separating lattice sites.
		
	In the superfluid phase, when condensate fraction \eqref{eq:Fraction} is close to one, most of the atoms tend to occupy the zero momentum mode. In this regime, the following part of the Hamiltonian \eqref{eq:MomentumHamiltonian} can be considered, to effectively describe the system:
	\begin{equation} \label{eq:ZeroMomentumHamiltonian}
        \hat{\mathcal{H}}_{BH, q=0} = \epsilon_{0} \hat{\tilde{N}}_{0}^A + \epsilon_{0} \hat{\tilde{N}}_{0}^B + \frac{U_{AA}}{2M} \hat{\tilde{N}}_{0}^{A^2} + \frac{U_{BB}}{2M} \hat{\tilde{N}}_{0}^{B^2} + \frac{U_{AB}}{M} \hat{\tilde{N}}_{0}^{A} \hat{\tilde{N}}_{0}^{B},
    \end{equation}
    where $\hat{\tilde{N}}_{0}^{A}, \hat{\tilde{N}}_{0}^{B}$ are numbers of atoms in the zero momentum mode defined as
    \begin{subequations}
        \begin{align}
        \hat{\tilde{N}}_{0}^{A} &:= \hat{\tilde{a}}_{0}^{\dagger} \hat{\tilde{a}}_{0}, \\
        \hat{\tilde{N}}_{0}^{B} &:= \hat{\tilde{b}}_{0}^{\dagger} \hat{\tilde{b}}_{0}.
        \end{align}
    \end{subequations}
    
    Assuming that the system is in the superfluid phase with condensed fraction $f_c\approx1$, one might expect that $\braket{\hat{\tilde{N}}_{0}^{A}+\hat{\tilde{N}}_{0}^{B}} \approx N$ and therefore the Bose-Hubbard model for q=0 takes the form:
    \begin{equation*}
        \hat{\mathcal{H}}_{BH, q=0} = \frac{U_{AA}+U_{BB} - 2 U_{AB}}{2M} \hat{S}_{z_0}^2 + \frac{U_{AA}-U_{BB}}{2M} \hat{S}_{z_0} \hat{\tilde{N}}_{0} +
    \end{equation*}
    \begin{equation} \label{eq:SzHamiltonian}
        + \frac{U_{AA}+U_{BB}+4 U_{AB}}{8M} \hat{\tilde{N}}_{0}^2 + \epsilon_{0} \hat{\tilde{N}}_{0},
    \end{equation}
    where
    \begin{subequations}
        \begin{align}
            \hat{S}_{z_0} &:= \frac1{2} \left( \hat{\tilde{N}}_{0}^{A} - \hat{\tilde{N}}_{0}^{B} \right),\\
            \hat{\tilde{N}}_{0} &:= \hat{\tilde{N}}_{0}^{A} + \hat{\tilde{N}}_{0}^{B}.
        \end{align}
    \end{subequations}
    The first term in \eqref{eq:SzHamiltonian} has the form of the famous one-axis twisting model \cite{kitagawa1993squeezed}. Here the $z$ component of the collective vector refers to the zero-momentum mode only.

    It is already widely understood that the one-axis twisting model, from the initial spin-coherent state, generates squeezed and entangled states, including Schr\"odinger cat state \cite{interferReview}. The latter one is nothing else as the rotated GHZ state that this thesis focuses on. Taking this into account, one can expect that the Bose-Hubbard model generates squeezed and entangled states in the superfluid regime. Therefore, the idea of this work, as explained in Subsection~\refsec[sec:Protocol], is to generate an entangled state in the superfluid phase and then shift the system to the Mott insulator phase by rising the lattice height.

	\section{The initial state} \label{sec:initialStatePreparation}

	It is already well-recognised that squeezing and entanglement in an optical lattice can be generated from the initial spin-coherent state \cite{paper}. The spin-coherent state can be conveniently prepared in numerical simulation using the following procedure.

	One starts the simulation from the ground state $\ket{\text{GS}^A}$ of the Hamiltonian \eqref{Hamiltonian}, with additional condition of all particles being in the $A$ state. The ground state is calculated numerically. An example for $N=M=2$ is given in the Appendix~\refsec[app:Neq2].	The $\ket{\text{GS}^A}$ state is then exposed to rotation around the $y$-axis of the generalised Bloch sphere to obtain \newterm[spin-coherent state] $\ket{\text{GS}^+}$:
	\begin{equation} \label{GS+}
	\ket{\Psi(t=0)} = \ket{\text{GS}^+} := e^{-\I \frac{\pi}{2} \hat{S}_y} \ket{\text{GS}^A},
	\end{equation}
	where $\hat{S}_y$ operator is defined by \eqref{eq:SpinOperators}.

	The spin-coherent state is then evolved according to the Schr\"odinger equation:
	\begin{equation}
	i\hbar \frac{\partial}{\partial t} \ket{\Psi(t)} = \hat{\mathcal{H}}_{BH}(t) \ket{\Psi(t)}.
	\end{equation}

	In this thesis the equation is solved numerically. The equation can be solved analytically only for $N=M=2$ and constant Hamiltonian. The solution for the $N=M=2$ case is presented in Appendix~\refsec[app:Neq2].

	\section{The protocol} 
	\label{sec:Protocol}
	
	This section proposes a protocol for generating entangled states. The protocol exploits the existence of two phases in the optical lattice. It is achieved by using a model of an optical lattice filled with ultracold atoms. At the beginning, the lattice is prepared in a ground state of bosonic atoms in one internal state $A$. Then a $\pi/2$-pulse is applied to create a superposition of two internal states of atoms (it is the spin-coherent state \eqref{GS+}). The system being in superfluid phase is evolved dynamically to create entanglement. During evolution, the lattice depth is being increased to reach Mott insulator phase and eventually stop the interaction to freeze the state. The quantum correlator $\mathcal{C}_{\vec{e}}$ defined in the next section \refsec[sec:QuantumEntanglement] is used to quantify entanglement in the 6-partite system. The study finds that the appropriate choice of parameters in the protocol allows one to create maximally entangled Greenberger-Horne-Zeilinger (GHZ) state. The protocol follows the method presented in \cite{paper}. This thesis presents an analysis of entanglement in the system and introduces tools similar to those presented in \cite{Chwe20} to measure entanglement in a more general case.
	
	\begin{figure}[H]
		\centering
		\includegraphics[width=0.9\textwidth]{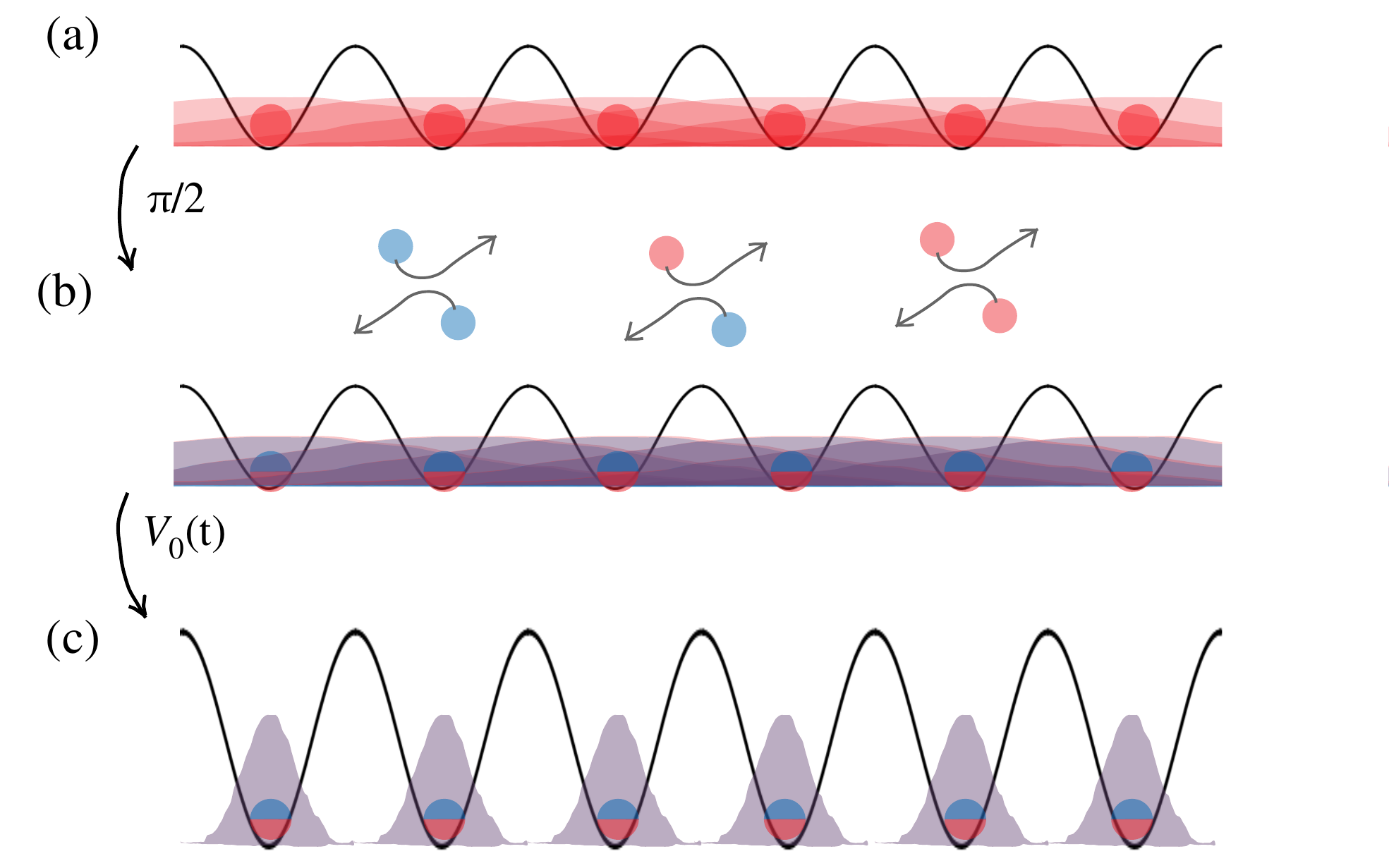}
		\caption{(a) At the beginning, in a shallow optical lattice, ultracold bosons in an internal state $A$ are prepared in the superfluid phase. The number of atoms is equal to the number of lattices. (b) Then a $\pi/2$-pulse is applied to put each atom in a coherent superposition of two internal states $A$ and $B$. As the system is in the superfluid phase, non-linear interaction between atoms starts the generation of quantum correlations in the system. (c) Depth of the lattice is being gradually increased to perform adiabatic change of the Hamiltonian and shift the system from the superfluid to the Mott-insulator phase and freeze the created correlations. Reprinted from \cite{paper}.}
		\label{optical_lattice-scheme}
	\end{figure}

	\chapter{Methods}
	
	\section{Strategy} 

	The research started from investigating the ground states of the Hamiltonian \eqref{Hamiltonian} for different parameters of the lattice. It was done to get a better understanding of the phase diagram of the system and to find the numerical values of the parameters leading to Mott insulator phase and superfluid phase (Section~\refsec[sec:phaseDiagram]).

	Then a numerical program (Appendix~\refsec[app:NumericalProgram]) was prepared to calculate the ground states of the Bose-Hubbard Hamiltonian with different values of its parameters. To construct the Hamiltonian, the Fock basis (Subsection~\refsec[sec:FockBasis]) was used. The states were represented by vectors whose components correspond to the coefficients of decomposition in the basis. The operators, such as the Bose-Hubbard Hamiltonian, were represented by matrices in the same basis. Values of condensate fraction (\ref{eq:Fraction}) and particle number variance (\ref{var_i}) for these states were calculated to specify for which range of parameters one deals with Mott or superfluid regime. The results are shown in figures \ref{phaseDiag-frac} and \ref{phaseDiag-var}.

	From the phase diagram, the appropriate parameters of the system were found to create a protocol for freezing the states. The idea was to evolve the state in the superfluid regime so it would become entangled and adiabatically shift the system into Mott regime to ``freeze'' the entanglement.

	Next, a numerical program was written to solve numerically the Schr\"odinger equation using the Runge-Kutta method \cite{RK4}. The initial state was prepared numerically according to Section~\refsec[sec:initialStatePreparation].
	During evolution the program was also calculating values of other observables such as fluctuations of particles number $\var_i$ \eqref{var_i}, superfluid fraction $f_c$ \eqref{eq:Fraction} and correlator \Cesq \eqref{eq:GeneralCorrelator}.
	A linear increase of the lattice potential was assumed what is easily achievable in an experiment. The aim was to verify whether it is possible to generate entanglement in the lattice and store it in Mott insulator phase, and possibly to find a protocol, which gives the most entangled state. Simulations for different ramp times $\tau$ were run to analyse the resulting states. To examine the entanglement in the system, Inequality (\ref{eq:Bound}) was derived and used as a criterion for separability. The value of \Cesq was employed as a measure of entanglement.

	\section{Computer tools} 
	
	To perform numerical simulations, programming languages such as \verb|Fortran 95|, \verb|C++| and \verb|Pyhton| were used. The research involved \verb|Sparse BLAS| and \verb|LAPACK| libraries in \verb|Fortran| code, \verb|OpenBLAS|, \verb|LAPACK|, \verb|TBB| and \verb|Armadillo| in \verb|C++| code and \verb|scipy.sparse| package in the \verb|Python| code to ensure efficient computations. All the simulations were prepared in the framework of GNU/Linux operating system Ubuntu. To increase the computational efficiency, the fact that the Hamiltonian matrix is sparse was exploited and special techniques for operating on sparse matrices were applied. A brief description of the computer program realising the simulations is presented in the Appendix~\refsec[app:NumericalProgram].

	\chapter{Results} \label{ch:Results}
	
    This work presents a method of generating and storing entangled states of a few bosonic atoms in an optical lattice potential described by the Bose-Hubbard model. To prove that there is entanglement in the system, a family of correlators \Cesq was introduced in Subsection~\refsec[subs:Correlator]. The value of \Cesq is a measure of entanglement in the system as shown in Subsection~\refsec[subs:Witness]. To validate that the correlator \Cesq can be used as a witness of entanglement, another important result of this thesis -- Theorem~\refsec[the:Theorem1] leading to separability bound \eqref{eq:Bound} -- was proven.   

    In the previous chapters, the Bose-Hubbard model describing one-dimensional optical lattice was introduced (Section~\refsec[sec:BoseHubbardModel]). Then it was shown that this model can be mapped into one-axis twisting model and allows producing entangled states (Section~\refsec[sec:HamiltonianIn0q]). A protocol generating entangled states, using the spin-coherent state \eqref{GS+} as the initial state, was described in Section~\refsec[sec:Protocol]. The protocol consists of the following steps: First, the spin-coherent state $\ket{GS^+}$ \eqref{GS+} is created in the lattice. The lattice is tuned so that the system remains in the superfluid regime to enable entanglement generation. Then the lattice potential height is slowly increased to adiabatically shift the system to the Mott regime where the dynamics slows down and the level of entanglement is frozen.

    The following Section~\refsec[sec:Tuning] presents how to determine the parameters of the model (number of particles per site $\nu$ and the lattice potential height $V_0$) allowing to operate the lattice from superfluid to Mott insulator regime. The form of the states produced by the protocol is discussed in Section~\refsec[sec:InternalStructure]. For some ramp time, the resulting state \eqref{eq:OurGHZ}, at the end of the evolution, is the GHZ state. In Section~\refsec[sec:CorrelatorChoice] it is shown how to find a correlator proving that this state is the maximally entangled state. At the end, Section~\refsec[sec:Breaking] demonstrates that indeed the protocol produces entangled states and specifically the maximally entangled state.

	\section{Tuning the trap} \label{sec:Tuning}
	
	The goal of this work was to create and store an entangled state. As shown in Section~\refsec[sec:HamiltonianIn0q], Bose-Hubbard Hamiltonian creates entanglement in a system in the superfluid regime. On the other hand the Mott insulator phase is necessary to freeze the state. This section presents how to choose the proper values of the number of particles per site $\nu$ and the lattice potential height $V_0$ to prepare the system in the superfluid phase and shift it to the Mott insulator phase.
	
	The Mott insulator phase is only achievable for integer number of particles per lattice site \cite{BH_model}, as it is apparent from the phase diagram of the Bose-Hubbard model (Figure~\ref{phaseDiag-var}). This study examines a system with one particle per lattice site ($N = M$). It has an additional feature that in the deep Mott regime the system can be described as a spin-$1/2$ chain \cite{spin_chain}. Due to computational limitations, it was chosen to consider a system with six lattice sites ($M=6$).

	A simulation with linear change of potential from $3 E_R$ to $45 E_R$ in a very long time (ramp time $\tau=10000~\hbar/E_R$) was run to provide adiabatic evolution. The variance of particles' number and condensate fraction was calculated to investigate for which ranges of potential there is Mott insulator or superfluid phase.
	
	\begin{figure}[H]
		\centering
		\includegraphics[width=0.8\textwidth]{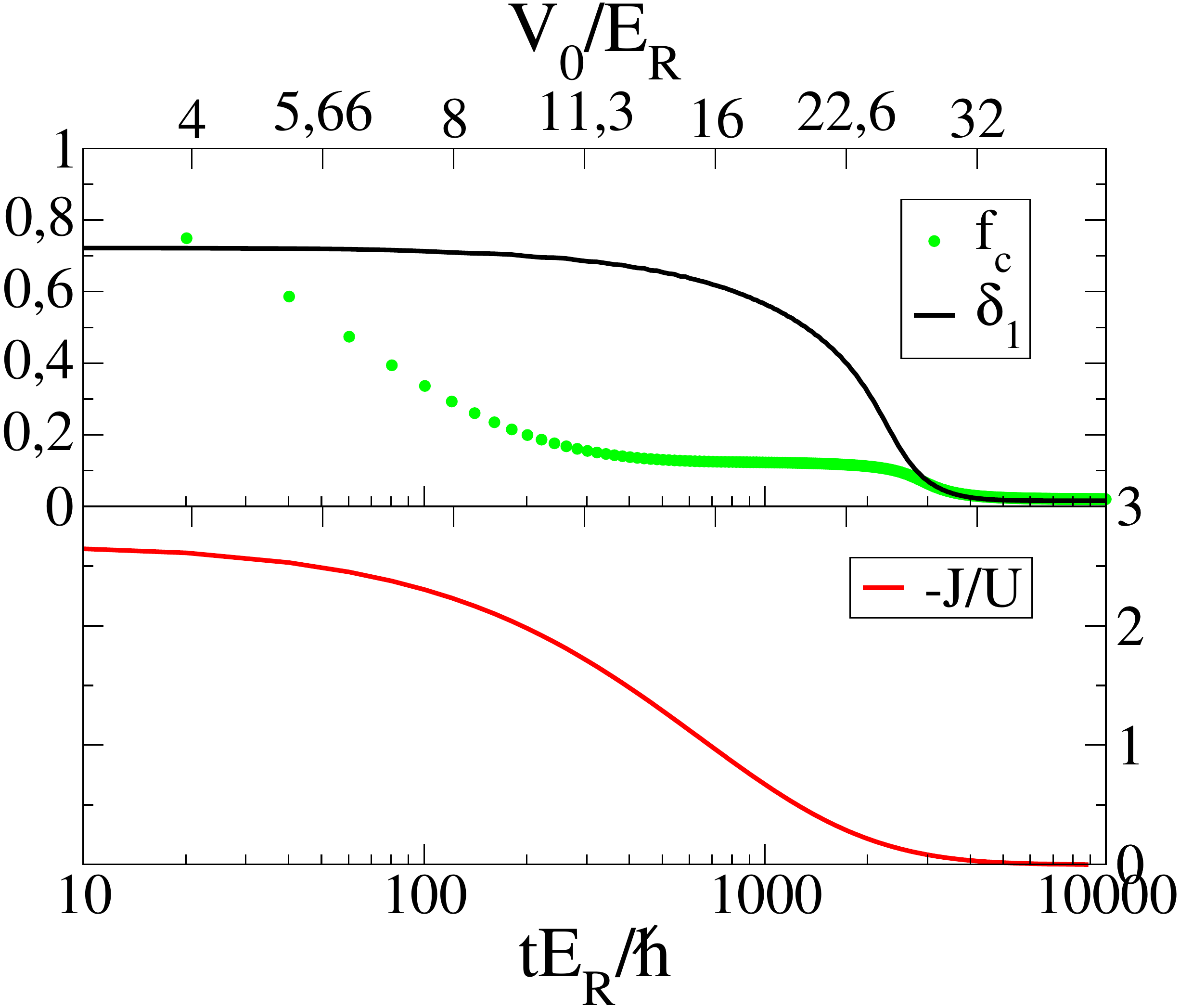}
		\caption{A plot showing the condensate fraction $f_c$ (\ref{eq:Fraction}) and the variance of the number of particles in the first site $\delta_1$ (\ref{var_i}) as a function of the lattice potential $V_0$ during adiabatic evolution of the system ($V_0$ changing slowly from $3$ to $45 E_R$).}
		\label{fig:VarianceVsV0}
	\end{figure}
	
	As observed in Figure~\ref{fig:VarianceVsV0}, the condensate fraction $f_c$ is close to one and the system is in the superfluid regime for small values of the lattice's potential ($V_0 \lesssim 4 E_R$). In this regime, the high value of condensate fraction $f_c$ allows entanglement generation (as explained in Section~\refsec[sec:HamiltonianIn0q]). When increasing $V_0$ to around $30 E_R$, the variance of number of particles in lattice site $\delta_i$ tends to zero (Figure~\ref{fig:VarianceVsV0}). It indicates that the system arrives at the Mott regime with exactly one atom per lattice site. The interaction between the atoms is hindered. The change of the system's parameters is then suppressed and one can expect to be able to ``freeze'' the resulting state.

	\section{Internal structure of the state} \label{sec:InternalStructure}
	
	To investigate the protocol (Section~\refsec[sec:Protocol]), simulations of the system with $N=M=6$ for different ramp times $\tau$ were run. In \cite{paper} it was shown that the states at the end of the ramp are of the form:
	\begin{dmath}
	\label{final_states}
	\ket{\Psi}_{\text{MI},\,N\!=\,6} = c_{60} \left( \ket{\mathcal{S}_{6,0}} + \ket{\mathcal{S}_{0,6}} \right) + c_{51} \left(\ket{\mathcal{S}_{5,1}} + \ket{\mathcal{S}_{1,5}}\right) + \\
	+ c_{42} \left(\ket{\mathcal{S}_{4,2}} + \ket{\mathcal{S}_{2,4}}\right) + c_{33}\, \ket{\mathcal{S}_{3,3}},
	\end{dmath}
	where $\ket{\mathcal{S}_{N_A,N_B}}$ (with $N_A+N_B=N=M$) are defined as
	\begin{subequations}
	\begin{align}
	\ket{\mathcal{S}_{N_A,N_B}} := \sum_{ \substack{ \{n^A_i \}:\\ \sum_i n^A_i=N_A,\\ n^A_i \leq 1 } } \sum_{ \substack{ \{n^B_i \}:\\ \sum_i n^B_i=N_B,\\ n^A_i + n^B_i \leq 1 } } \ket{\{n^A_i\}; \{n^B_i\}} \equiv \\
	\equiv \frac{1}{N_A! N_B!} \sum_{\sigma \in S_M} \prod_{k=1}^{N_A} \prod_{l=N_A+1}^{N_A+N_B} \hat{a}^\dagger_{\sigma(k)} \hat{b}^\dagger_{\sigma(l)} \vac
	\end{align}
    \end{subequations}
	and $S_M$ are all permutations of a set with $M$ elements. Example $\ket{\mathcal{S}_{N_A,N_B}}$ states using notation introduced in \eqref{eq:FockState}:
	\begin{equation}
	    \ket{\mathcal{S}_{6,0}} = \ket{1,1,1,1,1,1;0,0,0,0,0,0},
	\end{equation}
	\begin{equation*}
	    \ket{\mathcal{S}_{1,5}} = \ket{1,0,0,0,0,0;0,1,1,1,1,1} + \ket{0,1,0,0,0,0;1,0,1,1,1,1} +
	\end{equation*}
	\begin{equation}
	    + \ket{0,0,1,0,0,0;1,1,0,1,1,1} + \ket{0,0,0,1,0,0;1,1,1,0,1,1} +
	\end{equation}
	\begin{equation*}
	    + \ket{0,0,0,0,1,0;1,1,1,1,0,1} + \ket{0,0,0,0,0,1;1,1,1,1,1,0}.
	\end{equation*}
	Values of $c_{AB}$ factors in $\ket{\Psi}_{\text{MI},\,N\!=\,6}$ state can be well approximated by:
	\begin{equation}
	c_{60} = \frac{1}{2^3}, ~~ c_{51} = \frac{1}{2^3} e^{\I \phi_{51}}, ~~ c_{42} = \frac{1}{2^3} e^{\I \phi_{42}}, ~~ c_{33} = \frac{1}{2^3} e^{\I \phi_{33}},
	\end{equation}
	where $\phi_{51}, \phi_{42}$ and $\phi_{33}$ numbers are from range $[0, 2\pi[$.
	
	From the results of simulations presented in Figure~\ref{fig:StateStructure} it can be concluded that the values of $\phi_{51}, \phi_{42}$ and $\phi_{33}$ phases depend linearly on the ramp time $\tau$.
	
	\begin{figure}[H]
		\centering
		\includegraphics[width=0.85\textwidth]{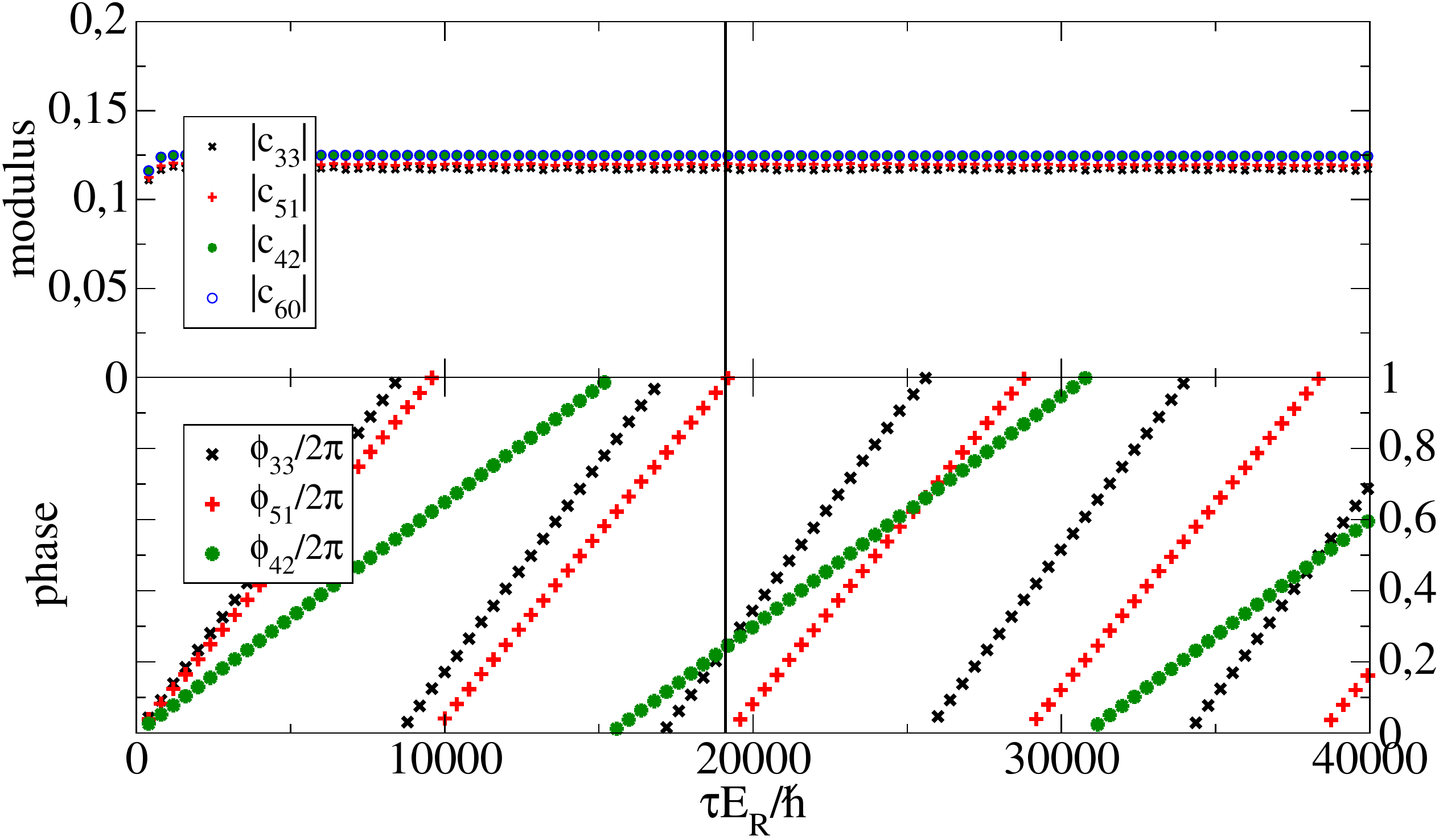}
		\caption{This figure shows internal structure of the final state for different ramp times $\tau$. The upper plot demonstrates the norm of coefficients $c_{N_AN_B}$ whose values are close to $1/2^3$ as expected. The lower plot shows phases of the coefficients, which are linearly changing with the value of $\tau$. The value $\tau=19100$ is marked with horizontal line. For this ramp time, one gets $\phi_{42}\approx0$ and $\phi_{51}\approx\frac{\pi}{2}\approx\phi_{33}$, and therefore the state is a superposition of two phase-states \eqref{eq:OurGHZ} -- which is the GHZ state. }
		\label{fig:StateStructure}
	\end{figure}
	
	To rewrite the resulting state $\ket{\Psi}_{\text{MI},N=6}$ the phase-state $\ket{\phi}_{\text{MI}}$ can be defined:
	\begin{equation}
	\ket{\phi}_{\text{MI}} := \left( \bigotimes_{j=1}^{6} \frac{\hat{a}^{{(j)}^\dagger}+e^{\I \phi}\hat{b}^{{(j)}^\dagger}}{\sqrt{2}} \right) \vac
	\end{equation}
	with $\vac$ being the vacuum Fock state. Expressing $\ket{\mathcal{S}_{N_A,N_B}}$ in terms of phase-states leads to:
	\begin{subequations}
		\label{phase_states}
		\begin{align}
		\frac{3}{4}(\ket{\mathcal{S}_{6,0}}+\ket{\mathcal{S}_{0,6}}) &= \ket{0}_{\text{MI}} + \ket{\pi}_{\text{MI}} + \ket{\frac{\pi}{3}}_{\text{MI}} + \ket{\frac{2\pi}{3}}_{\text{MI}} + \ket{\frac{4\pi}{3}}_{\text{MI}} + \ket{\frac{5\pi}{3}}_{\text{MI}}, \\	\frac{1}{2}(\ket{\mathcal{S}_{5,1}}+\ket{\mathcal{S}_{1,5}}) &= \ket{0}_{\text{MI}} - \ket{\pi}_{\text{MI}} - \I \left( \ket{\frac{\pi}{2}}_{\text{MI}} + \ket{\frac{3\pi}{2}}_{\text{MI}} \right), \\
		\frac{3}{4}(\ket{\mathcal{S}_{4,2}}+\ket{\mathcal{S}_{2,4}}) &= 2\ket{0}_{\text{MI}} + 2\ket{\pi}_{\text{MI}} - \left( \ket{\frac{\pi}{3}}_{\text{MI}} + \ket{\frac{2\pi}{3}}_{\text{MI}} + \ket{\frac{4\pi}{3}}_{\text{MI}} + \ket{\frac{5\pi}{3}}_{\text{MI}} \right), \\
		\frac{1}{2} \ket{\mathcal{S}_{3,3}} &= \ket{0}_{\text{MI}} - \ket{\pi}_{\text{MI}} + \I \left( \ket{\frac{\pi}{2}}_{\text{MI}} + \ket{\frac{3\pi}{2}}_{\text{MI}} \right).
		\end{align}
	\end{subequations}
	
	From (\ref{final_states}) and (\ref{phase_states}) it is apparent that the state at the end of the ramp $\ket{\Psi}_{\text{MI},N=6}$ defined in \eqref{final_states} can be written as a linear combination of phase-states. Moreover, for $e^{\I \phi_{42}}=1$ and $\phi_{51}=\phi_{33}=\frac{\pi}{2}$ it simplifies to the superposition of only two phase-states:
	\begin{equation*}
	\ket{\Psi} := \ket{\Psi}_{\text{MI},N=6}\left.\right|_{\phi_{42}=0, \phi_{51}=\phi_{33}=\frac{\pi}{2}} = \frac{e^{\I \pi / 4}}{\sqrt{2}} \left(\ket{0}_{\text{MI}}-\I \ket{\pi}_{\text{MI}}\right) =
	\end{equation*}
	\begin{equation}
	\label{eq:OurGHZ}
	= \frac{e^{\I \pi / 4}}{\sqrt{2}} \left( \bigotimes_{j=1}^{6} \frac{\hat{a}^{{(j)}^\dagger}+\hat{b}^{{(j)}^\dagger}}{\sqrt{2}} - \I \bigotimes_{j=1}^{6} \frac{\hat{a}^{{(j)}^\dagger}+\I\hat{b}^{{(j)}^\dagger}}{\sqrt{2}} \right) \vac.
	\end{equation}
	The above form of the state at the end of the ramp suggests that it is the GHZ state~\eqref{GHZ} (up to a phase factor between the two of superposed states). From the results of the simulation it is evident that the GHZ state can be produced when $\tau \approx 19000~E_R/\hbar$ (see Figure~\ref{fig:StateStructure}).

	\section{Choice of the proper correlator} 
	\label{sec:CorrelatorChoice}

	To prove that the state $\ket{\Psi}$ defined by Equation \eqref{eq:OurGHZ} is entangled, the correlator of the form (\ref{eq:GeneralCorrelator}) is used. The reasoning below demonstrates how to choose the direction $\vec{e}$ appearing in the definition of the correlator \eqref{eq:GeneralCorrelator} to make it useful for the system generated by the protocol. It is done for arbitrary number of sites $M$; thus, in this section the state $\ket{\Psi}$ takes more general form than \eqref{eq:OurGHZ}:
	\begin{equation}
	    \label{eq:GeneralPhaseState}
	    \ket{\Psi} = \frac{C}{\sqrt{2}} \left( \bigotimes_{j=1}^{M} \frac{\hat{a}^{{(j)}^\dagger}+\hat{b}^{{(j)}^\dagger}}{\sqrt{2}} - \I \bigotimes_{j=1}^{M} \frac{\hat{a}^{{(j)}^\dagger}+\I\hat{b}^{{(j)}^\dagger}}{\sqrt{2}} \right) \vac,
	\end{equation}
	with $C$ being some phase factor (complex number of norm one). To find a proper direction $\vec{e}$ the following reasoning can be carried out:

	\begin{quote}

		If in each of $M$ lattice sites there is exactly one atom, the lattice can be seen as a chain of $M$ qubits, where atom in state $A$ denotes $\ket{0}$ and atom in state $B$ denotes $\ket{1}$. The GHZ state will be then represented in the Fock basis by
		\begin{equation}
		\label{eq:FockGHZ}
		\ket{\text{GHZ}} := \frac{1}{\sqrt{2}} \left(\ket{0}^{\otimes M} + \ket{1}^{\otimes M}\right) \doteq \frac{1}{\sqrt{2}} \left(\bigotimes_{j=1}^M \hat{a}^{{(j)}^\dagger} + \bigotimes_{j=1}^M \hat{b}^{{(j)}^\dagger}\right)\vac.
		\end{equation}
		The analysis presented here is very general. Therefore the notation typical for description of qubits as introduced in quantum information tasks is used. Note, however, that the state $\ket{0}^{\otimes M}$ can also be expressed using the Fock state basis (see Subsection~\refsec[sec:FockBasis]), then $\ket{0}^{\otimes M} \doteq \ket{1,1,...,1;0,...,0}$.
		The correlator \eqref{eq:GeneralCorrelator} gives the largest possible value for $\vec{e}$ such, that $\hat{S}^{(j)}_{\vec{e}} = \hat{S}_+^{(j)}$ (see Subsection~\refsec[subs:Correlator]), as
		\begin{equation}
		\left| \mathcal{C}_{\vec{e}}^{\text{GHZ}} \right|^2 := \left| \braket{\bigotimes_{j=1}^M \hat{S}_{+}^{(j)}} \right|^2 = \left| \braket{\text{GHZ}|\bigotimes_{j=1}^M \hat{a}^{{(j)}^\dagger}\hat{b}^{(j)}|\text{GHZ}} \right|^2 = \frac{1}{4}
		\end{equation}
		and $1/4$ is the largest possible value of the correlator as proved in Appendix~\refsec[app:MaxCorrelator].
		
		Assuming that the state \eqref{eq:GeneralPhaseState} is the rotated maximally entangled state, one can find a rotation operator $\hat{U}^{(j)}$ acting on single qubit such that
		\begin{equation}
		    \label{eq:RotGHZ}
		    \bigotimes_{j=1}^M \hat{U}^{{(j)}^\dagger}\ket{\Psi}=\ket{\text{GHZ}}
		\end{equation}
		and then:
		\begin{equation}
		\frac{1}{4}= \left| \bra{\text{GHZ}}\bigotimes_{j=1}^M \hat{S}_{+}^{(j)}\ket{\text{GHZ}} \right|^2 = \left| \bra{\Psi}\bigotimes_{j=1}^M \hat{U}^{(j)} \hat{S}_{+}^{(j)} \hat{U}^{{(j)}^\dagger} \ket{\Psi} \right|^2.
		\end{equation}
		From the above, it is clear that one can use $\hat{S}_{rot}:=\bigotimes_{j=1}^M \hat{S}_{rot}^{(j)}:=\bigotimes_{j=1}^M \hat{U}^{(j)} \hat{S}_{+}^{(j)}\hat{U}^{{(j)}^\dagger}$ operator to prove entanglement of the $\ket{\Psi}$ state.
		
		With $\hat{U}^{(j)} \hat{U}^{{(j)}^\dagger} = \identity^{(j)}$ being ansatz it can be found that sufficient conditions for $\hat{U}^{{(j)}^\dagger}$ fulfilling \eqref{eq:RotGHZ} are
		\begin{subequations}
		    \label{eq:RotationOperator}
			\begin{align}
			\hat{U}^{{(j)}^\dagger} \hat{a}^{{(j)}^\dagger} = \frac{\hat{a}^{{(j)}^\dagger} + \I \hat{b}^{{(j)}^\dagger}}{\sqrt{2}}, \\
			\hat{U}^{{(j)}^\dagger} \hat{b}^{{(j)}^\dagger} = \frac{\hat{a}^{{(j)}^\dagger} - \I \hat{b}^{{(j)}^\dagger}}{\sqrt{2}},
			\end{align}
		\end{subequations}
	    because
		\begin{equation*}
		    \bigotimes_{j=1}^M \hat{U}^{{(j)}^\dagger} \ket{\Psi} = \bigotimes_{j=1}^M  \hat{U}^{{(j)}^\dagger} \frac{C}{\sqrt{2}} \left( \bigotimes_{j=1}^M \frac{\hat{a}^{{(j)}^\dagger}+\hat{b}^{{(j)}^\dagger}}{\sqrt{2}} - \I \bigotimes_{j=1}^M \frac{\hat{a}^{{(j)}^\dagger}+\I\hat{b}^{{(j)}^\dagger}}{\sqrt{2}} \right) \vac =
		\end{equation*}
		\begin{equation*}
		    = \frac{C}{\sqrt{2}} \left( \bigotimes_{j=1}^M \left( \hat{U}^{{(j)}^\dagger} \frac{\hat{a}^{{(j)}^\dagger}+\hat{b}^{{(j)}^\dagger}}{\sqrt{2}} \right) - \I \bigotimes_{j=1}^M  \left( \hat{U}^{{(j)}^\dagger}  \frac{\hat{a}^{{(j)}^\dagger}+\I\hat{b}^{{(j)}^\dagger}}{\sqrt{2}} \right) \right) \vac =
		\end{equation*}
		\begin{equation}
		    = \frac{C}{\sqrt{2}} \left( \bigotimes_{j=1}^M  \hat{a}^{{(j)}^\dagger} + \bigotimes_{j=1}^M \hat{b}^{{(j)}^\dagger} \right) \vac = C \ket{\text{GHZ}}.
		\end{equation}

		From \eqref{eq:RotationOperator} it is evident that:
		\begin{subequations}
			\begin{align}
			\hat{U}^{(j)} \hat{a}^{{(j)}^\dagger} &= \frac{\hat{a}^{{(j)}^\dagger} + \hat{b}^{{(j)}^\dagger}}{\sqrt{2}} \\
			\hat{U}^{(j)} \hat{b}^{{(j)}^\dagger} &= \frac{\hat{a}^{{(j)}^\dagger} -  \hat{b}^{{(j)}^\dagger}}{\I \sqrt{2}},
			\end{align}
		\end{subequations}
		and
		\begin{equation*}
		\hat{S}_{rot} := \bigotimes_{j=1}^M \hat{S}_{rot}^{(j)} := \bigotimes_{j=1}^M \hat{U}^{(j)} \hat{a}^{{(j)}^\dagger} \hat{b}^{(j)} \hat{U}^{{(j)}^\dagger} = \bigotimes_{j=1}^M \hat{U}^{(j)} \hat{a}^{{(j)}^\dagger} (\hat{U}^{(j)} \hat{b}^{{(j)}^\dagger})^\dagger =
		\end{equation*}
		\begin{equation}
		= \bigotimes_{j=1}^M \frac{\I}{2}\left(\hat{a}^{{(j)}^\dagger} \hat{a}^{(j)} - \hat{a}^{{(j)}^\dagger}\hat{b}^{(j)} + \hat{b}^{{(j)}^\dagger}\hat{a}^{(j)} - \hat{b}^{{(j)}^\dagger}\hat{b}^{(j)} \right) = \bigotimes_{j=1}^M \left( \I \hat{S}^{(j)}_z + \hat{S}^{(j)}_y \right).
		\end{equation}
		
		The $\hat{U}^{(j)}$ operator is then rotation among $z$ and $y$ axes:
		\begin{equation}
		\hat{U}^{(j)} = e^{- \I \frac{\pi}{2} \hat{S}^{(j)}_z} e^{\I \frac{\pi}{2} \hat{S}^{(j)}_y},
		\end{equation}
		what is a unitary matrix, and thus, fulfils the ansatz. One can also easily check that $\hat{S}_{rot}^{(j)}$ belongs to the family of operators $\hat{S}_{\vec{e}}^{(j)}$ \eqref{eq:SpinEOperator} with $\vec{e}=(0,1,\I)$ fulfilling Condition \eqref{eq:VecEConditions}.
	\end{quote}
	
	From the above it is apparent that measuring the value of $\left|\braket{\hat{S}_{rot}}\right|^2$ on the final state $\ket{\Psi}$ \eqref{eq:OurGHZ} one gets $\sfrac{1}{4}$ what is greater than $2^{-12}$ and proves non-separability of the state. It is worth mentioning that the $\hat{S}_{+}^{(j)}$ operator was rotated, but conversely, the state could be rotated to get the GHZ state and the $\hat{S}_+^{(j)}$ operator used to prove entanglement.
	The next section presents the results of calculations of
	\begin{equation}
	    \left| \mathcal{C} \right|^2 := \left|\braket{\hat{S}_{rot}}\right|^2 = \left|\braket{\bigotimes_{j=1}^M \left( i\hat{S}^{(j)}_z + \hat{S}^{(j)}_y \right) }\right|^2
	\end{equation}
	in numerical simulations.

	\section{Breaking the separability bound} \label{sec:Breaking}

	Following the results analysed in Section~\refsec[sec:Tuning], it was chosen to operate $V_0$ between $3 E_R$ and $40 E_R$, to ensure that at the beginning the system remains superfluid (what guarantees entanglement generation) and it ends up in the Mott insulator phase (so that evolution is frozen). The lattice height was changed linearly during time $\tau$. To check what is the optimal speed of changing the potential, simulations with different $\tau$'s were run and value of \Csq indicating entanglement in the system at the end of evolution was observed. The results are presented in Figure~\ref{t_best}.
	
	\begin{figure}[H]
		\centering
		\includegraphics[width=\textwidth]{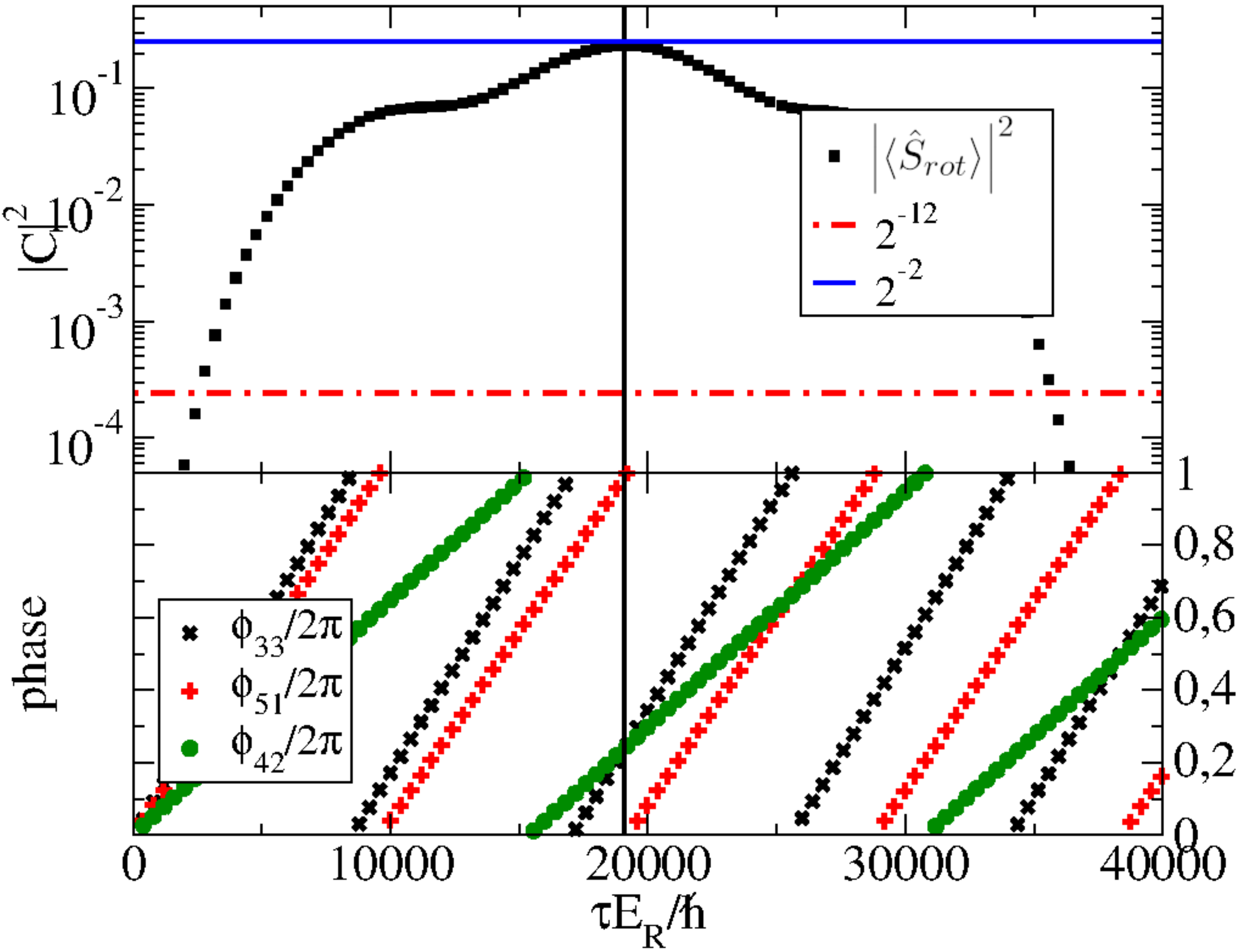}
		\caption{Each point on the plot represents a value of some variable at the end of the evolution for given $\tau$. The upper subplot shows the final values of \Csq together with the separability bound $2^{-12}$ (red dashed line) and the maximal possible value of \Csq (blue line). The lower subplot presents phases $\phi_{N_AN_B}$. The horizontal line marks $\tau=19100$ for which the value of the correlator is maximal. For that ramp time $\phi_{42}\approx0$ and $\phi_{51}\approx\frac{\pi}{2}\approx\phi_{33}$ what gives GHZ-like state (see Section~\refsec[sec:InternalStructure]).}
		\label{t_best}
	\end{figure}

	It can be observed that for ramps with ramp time $\tau \in [ 2600, 35700 ]$ the value of correlator \Csq on the final state, at the end of evolution, is greater than the separability limit $2^{-12}$. It indicates that the state cannot be described by the separable state, and therefore that there is entanglement in the system (see Subsection~\refsec[subs:Witness]). Moreover, for $\tau \approx 19000~E_R/\hbar$ the value of the correlator approaches the value $0.232$ which is close to $\sfrac{1}{4}$. It is the maximum possible value of the correlator (see Appendix~\refsec[app:MaxCorrelator]). At the same time the values of the phases of the nonzero coefficients of decomposition of the state in the Fock state basis (see Section~\refsec[sec:InternalStructure]) take the values $\phi_{42}\approx0$ and $\phi_{51}\approx\frac{\pi}{2}\approx\phi_{33}$. These values are expected when the GHZ state  \eqref{eq:OurGHZ} is realised in the system (see Section~\refsec[sec:InternalStructure]).
	
	In the following, a single evolution of the state in a ramp with ramp time $\tau=19000~E_R/\hbar$ will be considered to investigate how the entanglement does emerge in the system. The results of the simulation are presented in Figure~\ref{fig:Breaking}. Value of \Csq during the evolution is marked with the solid green line. The red dashed line represents the separability bound \eqref{eq:Bound}.
	
	\begin{figure}[H]
		\centering
		\includegraphics[width=0.9\textwidth]{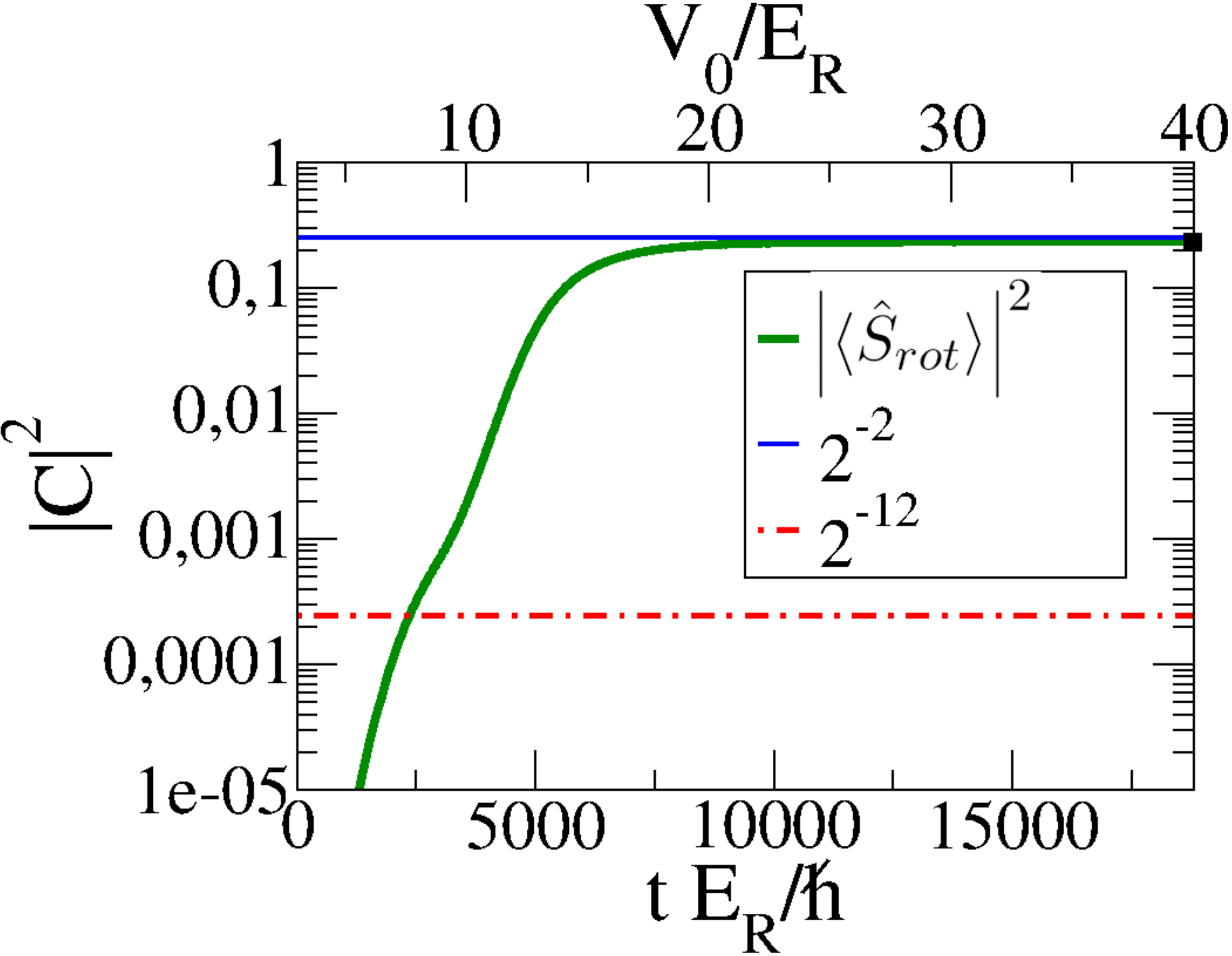}
		\caption{Horizontal axis marks the time of evolution and lattice height $V_0$ (which changes linearly with $t$). Vertical axis symbolises value of the correlator. Green line marks value of \Csq during evolution of the system. Red dotted line represents the bound for separable states for $M=6$ subsystems and blue solid line the maximal possible value of \Csq -- $1/4$. It is evident that the system breaks the separability bound and approaches highly entangled state.}
		\label{fig:Breaking}
	\end{figure}

    It is apparent that after time $2365~E_R/\hbar$ the state must be entangled as the value of \Csq is greater than the separability bound. The blue line marks $\sfrac{1}{4}$ -- the maximum possible value of the correlator. After time $10000~E_R/\hbar$ value of the correlator approaches $\sfrac{1}{4}$ and the state becomes almost maximally entangled. When the lattice depth $V_0$ exceeds $30 E_R$, the system is in the Mott insulator phase (see Figure~\ref{fig:VarianceVsV0}). Thanks to that, evolution of the state is frozen and the value of the correlator does not change anymore. Noticeably, it is a result of a numerical experiment, where decoherence processes are not taken into account.

	The above demonstrates that it is possible to use an optical lattice filled with a few bosonic atoms to create and store the maximally entangled state. To prove that, the correlator of form \Cesq \eqref{eq:GeneralCorrelator} was used. This work shows that the family of correlators \Cesq can be used to measure entanglement in an optical lattice. As \Cesq is a well-defined physical observable, it might be measured in a real experiment.

	\chapter{Summary and conclusion}

	The thesis presents a protocol for generation and storage of the maximally entangled GHZ state in the Mott phase. The method can be realised experimentally with a few ultracold atomic bosons loaded into one-dimensional optical lattice potential. It can be achieved by simply preparing all the atoms in one internal state, applying $\pi/2$-pulse before the evolution and linearly changing the lattice height.

	Numerical results for $N=6$ particles on $M=6$ lattice sites are shown. The author has also observed a similar behaviour for $N=M=4$ case. The same effects are expected to occur for an arbitrary large number of particles as long as the decoherence effects can be neglected. What is worth mentioning is that in the model used in the study, the number of particles per lattice site is exactly equal to one, change of the potential is slow enough that the system evolves adiabatically and are no particle losses. In the work, the correlator \Cesq was also defined. It measures the level of entanglement in the system and is a generalisation of the correlator proposed in \cite{Chwe20}. To generalise the applicability of the correlator from \cite{Chwe20}, analytical proofs of two inequalities: \eqref{eq:SpinInequality} and \eqref{eq:MaxCBound} are provided. Inequality \eqref{eq:SpinInequality} is used to indicate entanglement in an optical lattice filled with an arbitrary number of particles. This inequality defines a more general class of correlators that can be used to indicate entanglement in a system. Inequality \eqref{eq:MaxCBound} gives the bound for the value of the generalised correlator \Cesq. It is valid for a more general class of operators and any state with the number of atoms equal the number of lattice sites. It was proven numerically that the maximal value of the correlator \Cesq can be achieved when the GHZ state is generated and stored (Figure~\ref{fig:Breaking}) in the system composed of a few ultracold atoms loaded into the optical lattice potential by using the presented protocol (Section~\refsec[sec:Protocol]).

	The scheme can be implemented experimentally with present-day techniques and the correlator can be measured using for example quantum-gas microscopy \cite{Bakr547, Weitenberg2011}. The advantage of using a system with a few atoms is the high level of control one has, both for the preparation and for the measurement. The strength of the presented method is the possibility of scaling up the system size step by step by increasing the number of lattice sites or the spatial dimension. The discovered method has a potential to allow exploration of the boundary between the microscopic and the macroscopic world.

	\appendix

	\chapter{Proof of Inequality $\dgr[3]$}
	\label{app:CumbersomeInequality}

	This appendix contains a proof of Inequality $\dgr[3]$ \eqref{eq:BoundInequalities}. To prove the inequality, first, Lemma~\refsec[lem:Pauli] is introduced. The lemma is then used to prove Theorem~\refsec[the:Theorem1], which for $N=M$ is identical with Inequality $\dgr[3]$ \eqref{eq:BoundInequalities}.

	\begin{lemma}
	\label{lem:Pauli}
	    The sum of the modulus square of expectation value of spin operators $\hat{S}_{\vec{e}}^{(j)}$ defined in \eqref{eq:SpinEOperator} on subsystems $\hat{\rho}^{(j)}$ of any separable state $\bigotimes_{j=1}^M \hat{\rho}^{(j)}$ with $N$ atoms over $M$ lattice sites is no greater than $\sfrac{N}{2}$:
	    \begin{equation}
	        \label{eq:LemPauli}
            \sum_{j=1}^M \left| \Tr \left[ \hat{\rho}^{(j)} \hat{S}_{\vec{e}}^{(j)} \right] \right| \leq \frac{N}{2}.
	    \end{equation}
	\end{lemma}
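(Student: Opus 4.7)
The plan is to reduce the multi-site inequality to a single-site spin-length bound and then sum. First I would note that each $\hat{S}_\alpha^{(j)}$ ($\alpha=x,y,z$) defined in \eqref{eq:Pauli} is Hermitian, so the vector $\vec{s}^{(j)} := (\braket{\hat{S}_x^{(j)}}, \braket{\hat{S}_y^{(j)}}, \braket{\hat{S}_z^{(j)}})$ computed on $\hat{\rho}^{(j)}$ is a real vector in $\mathbb{R}^3$. Definition \eqref{eq:SpinEOperator} then gives $\Tr[\hat{\rho}^{(j)}\hat{S}_{\vec{e}}^{(j)}] = \vec{e}\cdot\vec{s}^{(j)}$, so the lemma asks me to bound $\sum_j |\vec{e}\cdot\vec{s}^{(j)}|$.

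Next I would exploit Condition \eqref{eq:VecEConditions} to show $|\vec{e}\cdot\vec{s}^{(j)}|\leq|\vec{s}^{(j)}|$. Splitting $\vec{e}=\vec{e}_R+\I\vec{e}_I$ into real and imaginary parts, the condition reads $(\vec{e}_R\cdot\vec{n})^2+(\vec{e}_I\cdot\vec{n})^2\leq 1$ for every real unit vector $\vec{n}$. Applying this with $\vec{n}=\vec{s}^{(j)}/|\vec{s}^{(j)}|$ yields $|\vec{e}\cdot\vec{s}^{(j)}|^2=(\vec{e}_R\cdot\vec{s}^{(j)})^2+(\vec{e}_I\cdot\vec{s}^{(j)})^2\leq|\vec{s}^{(j)}|^2$, as required.

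The core of the proof is then the single-site bound $|\vec{s}^{(j)}|\leq\braket{\hat{n}^{(j)}}/2$, where $\hat{n}^{(j)}:=\hat{a}^{(j)\dagger}\hat{a}^{(j)}+\hat{b}^{(j)\dagger}\hat{b}^{(j)}$. Inside the $n$-particle subspace of site $j$, the two-mode bosonic operators in \eqref{eq:Pauli} realise the spin-$n/2$ representation of SU(2); every eigenvalue of the projection $\vec{m}\cdot\vec{\hat{S}}^{(j)}$ along any real unit direction $\vec{m}$ lies in $[-n/2,n/2]$, so any pure $n$-particle state satisfies $|\braket{\vec{\hat{S}}^{(j)}}|\leq n/2$. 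Because each $\hat{S}_\alpha^{(j)}$ preserves particle number, a general $\hat{\rho}^{(j)}$ splits block-diagonally into fixed-$n$ parts with weights $p_n^{(j)}=\Tr[\hat{P}_n^{(j)}\hat{\rho}^{(j)}]$, and the triangle inequality together with convexity gives $|\vec{s}^{(j)}|\leq\sum_n p_n^{(j)}\cdot n/2=\braket{\hat{n}^{(j)}}/2$.

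Summing over $j$ and using $\sum_j\braket{\hat{n}^{(j)}}_{\hat{\rho}^{(j)}}=\braket{\hat{N}}_{\bigotimes_j\hat{\rho}^{(j)}}=N$ then closes the argument. The main subtlety I expect is the particle-number bookkeeping in the third step: one must justify that off-diagonal blocks in particle number contribute nothing to $\braket{\vec{\hat{S}}^{(j)}}$, so that the convex-combination bound really does apply to the most general mixed single-site state compatible with the total atom number $N$. Everything else is an essentially geometric manipulation of Condition \eqref{eq:VecEConditions} and the standard SU(2) length bound.
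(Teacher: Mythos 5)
Your proof is correct, and its first half (rewriting $\Tr[\hat{\rho}^{(j)}\hat{S}_{\vec{e}}^{(j)}]$ as $\vec{e}\cdot\vec{s}^{(j)}$ and using Condition \eqref{eq:VecEConditions} with $\vec{n}=\vec{s}^{(j)}/|\vec{s}^{(j)}|$ to get $|\vec{e}\cdot\vec{s}^{(j)}|\leq|\vec{s}^{(j)}|$) is exactly the paper's argument, with the real/imaginary split of $\vec{e}$ making explicit what the paper leaves implicit. The final step, however, is genuinely different and in fact tighter. The paper closes by writing $\sum_{j}\left|\overrightarrow{\braket{\hat{S}^{(j)}}}\right| = \left|\braket{\overrightarrow{\hat{S}}}\right| = \sfrac{N}{2}$, invoking the loosely stated claim from the spin-operator subsection that the collective spin has length $N/2$; taken literally this chain is imprecise, since the triangle inequality only gives $\left|\braket{\overrightarrow{\hat{S}}}\right|\leq\sum_j\left|\overrightarrow{\braket{\hat{S}^{(j)}}}\right|$ (the wrong direction) and a generic state satisfies $\left|\braket{\overrightarrow{\hat{S}}}\right|\leq N/2$ rather than equality. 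You instead prove the per-site bound $|\vec{s}^{(j)}|\leq\braket{\hat{n}^{(j)}}/2$ directly: the local spin operators conserve the local atom number, so $\hat{\rho}^{(j)}$ contributes only through its fixed-$n$ blocks, each block lives in the spin-$n/2$ Schwinger representation where $|\braket{\vec{\hat{S}}^{(j)}}|\leq n/2$, and convexity plus $\sum_j\braket{\hat{n}^{(j)}}=N$ finishes the argument. This is the correct way to make the paper's last equality rigorous, and it also transparently covers product states whose individual factors do not have sharp local particle numbers; the paper's version is shorter but relies on the collective-spin statement that, as written, holds with equality only for special (e.g.\ spin-coherent) states. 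No gap in your argument.
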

	
	\begin{proof}
	    The left-hand side of Inequality \eqref{eq:LemPauli} can be first rewritten as follows:
	    \begin{dmath}
	        \label{eq:SingleTrace}
            \left| \Tr \left[ \hat{\rho}^{(j)} \hat{S}_{\vec{e}}^{(j)} \right] \right| = \left| \Tr \left[ \hat{\rho}^{(j)} e_x \hat{S}_x^{(j)} \right] + \Tr \left[ \hat{\rho}^{(j)} e_y \hat{S}_y^{(j)} \right] + \Tr \left[ \hat{\rho}^{(j)} e_z \hat{S}_z^{(j)} \right] \right|\,\text{=} = \left| e_x \braket{\hat{S}_x^{(j)}}_{\hat{\rho}^{(j)}} + e_y \braket{\hat{S}_y^{(j)}}_{\hat{\rho}^{(j)}} + e_z \braket{\hat{S}_z^{(j)}}_{\hat{\rho}^{(j)}} \right|\,\text{=} = \left| \vec{e} \cdot \left( \braket{\hat{S}_x^{(j)}}_{\hat{\rho}^{(j)}},  \braket{\hat{S}_y^{(j)}}_{\hat{\rho}^{(j)}}, \braket{\hat{S}_z^{(j)}}_{\hat{\rho}^{(j)}} \right)^T \right|\,\text{=} = \left| \vec{e} \cdot \overrightarrow{\braket{\hat{S}^{(j)}}}_{\hat{\rho}^{(j)}} \right|,
	    \end{dmath}
	    where $\overrightarrow{\braket{\hat{S}^{(j)}}}_{\hat{\rho}^{(j)}}:=\left( \braket{\hat{S}_x^{(j)}}_{\hat{\rho}^{(j)}},  \braket{\hat{S}_y^{(j)}}_{\hat{\rho}^{(j)}}, \braket{\hat{S}_z^{(j)}}_{\hat{\rho}^{(j)}} \right)^T$. Expression \eqref{eq:SingleTrace} can be further investigated introducing unit vectors $\vec{n}^{(j)}=(n_x^{(j)},n_y^{(j)},n_z^{(j)})^T$ parallel to $\overrightarrow{\braket{\hat{S}^{(j)}}}_{\hat{\rho}^{(j)}}$, i.e.:
	    \begin{equation}
	        n_{x,y,z}^{(j)}:= \braket{\hat{S}_{x,y,z}^{(j)}}_{\hat{\rho}^{(j)}}/\left| \overrightarrow{\braket{\hat{S}^{(j)}}}_{\hat{\rho}^{(j)}} \right|.
	    \end{equation}
	    Using this notation, Expression \eqref{eq:SingleTrace} takes the following form:
	   \begin{equation}
	        \label{eq:ESProduct}
	        \left| \vec{e} \cdot \overrightarrow{\braket{\hat{S}^{(j)}}}_{\hat{\rho}^{(j)}} \right| = \left| \overrightarrow{\braket{\hat{S}^{(j)}}}_{\hat{\rho}^{(j)}} \right| \left| \vec{e} \cdot \vec{n}^{(j)} \right|.
	   \end{equation}
	   In this work they are considered only those $\vec{e}=(e_x,e_y,e_z)$ vectors, which scalar product with any unit vector of real numbers $\vec{n}$ is no greater than one:
	   \begin{equation}
	        \label{eq:GeneralConditionE}
	        \forall_{ \substack{ \vec{n}:\\ |\vec{n}|=1 }} \left| \vec{e}\cdot \vec{n} \right| \leq 1.
	   \end{equation}
	   This condition is necessary to obtain the following boundary for the expression from \eqref{eq:ESProduct}:
	   \begin{equation}
	        \label{eq:InequalityVectors}
	         \left| \vec{e} \cdot \overrightarrow{\braket{\hat{S}^{(j)}}}_{\hat{\rho}^{(j)}} \right| \leq \left| \overrightarrow{\braket{\hat{S}^{(j)}}}_{\hat{\rho}^{(j)}} \right|.
	    \end{equation}
	    Using the above inequality, the left-hand side of Inequality \eqref{eq:LemPauli} can be then bounded in the following way:
	    \begin{equation}
	        \label{eq:SpinInequality}
	        \sum_{j=1}^M \left| \Tr \left[ \hat{\rho}^{(j)} \hat{S}_{\vec{e}}^{(j)} \right] \right| \leq \sum_{j=1}^M \left| \overrightarrow{\braket{\hat{S}^{(j)}}}_{\hat{\rho}^{(j)}} \right| = \left| \braket{\overrightarrow{\hat{S}}} \right| = \frac{N}{2}
	    \end{equation}
	    as the average value of the collective spin $\braket{\overrightarrow{\hat{S}}}$ on any state with well-defined number of particles $N$, is equal to $\sfrac{N}{2}$ (see Subsection~\refsec[sec:SpinOperators]).
	    
	\end{proof}
	
	    It is worth noting that:
        \begin{dmath}
	        \label{eq:ENProduct}
	        \left| \vec{e} \cdot \vec{n} \right| = \left| e_x n_x + e_y n_y + e_z n_z \right| \text{=}
	        = \left( |e_x|^2~ n_x^2 + |e_y|^2~ n_y^2 + |e_z|^2~ n_z^2 + 2 \left( \text{Re}~ e_x ~\text{Re}~ e_y + \text{Im}~ e_x ~\text{Im}~ e_y \right)n_x n_y + \\
	        + 2 \left( \text{Re}~ e_y ~\text{Re}~ e_z + \text{Im}~ e_y ~\text{Im}~ e_z \right)n_y n_z + 2 \left( \text{Re}~ e_z ~\text{Re}~ e_x + \text{Im}~ e_z ~\text{Im}~ e_x \right)n_z n_x \right)^{1/2}
	   \end{dmath}
	   Based on the above, an obvious consequence is that when considering such $\vec{e}=(e_x,e_y,e_z)$ vectors that fulfil the following conditions:
	    \begin{subequations}
	        \label{eq:ConditionsE}
            \begin{gather}
	        \text{Re}~ e_x ~\text{Re}~ e_y = \text{Re}~ e_y ~\text{Re}~ e_z = \text{Re}~ e_z ~\text{Re}~ e_x = 0, \\
	        \text{Im}~ e_x ~\text{Im}~ e_y = \text{Im}~ e_y ~\text{Im}~ e_z = \text{Im}~ e_z ~\text{Im}~ e_x = 0, \\
	        |e_x|, |e_y|, |e_z|~ \leq 1,
	        \end{gather}
	    \end{subequations}
	    Inequality \eqref{eq:GeneralConditionE} holds, and thus, also the inequalities \eqref{eq:InequalityVectors} and \eqref{eq:SpinInequality} hold. The class of vectors satisfying \eqref{eq:ConditionsE} is narrower than of those satisfying \eqref{eq:GeneralConditionE}. The only vectors $\vec{e}$ that fulfil \eqref{eq:ConditionsE} are of the form:
	    \begin{equation}
	        \left( \alpha, \I \beta, 0 \right),~ \left( \I \alpha, \beta, 0 \right),~ \left( \alpha, 0, \I \beta \right),~ \left( \I \alpha, 0, \beta \right),~ \left( 0, \alpha, \I \beta \right) \text{or} \left( 0, \I \alpha, \beta \right),
	    \end{equation}
	    with $\alpha, \beta \in \mathbb{R}$ and $-1 \leq \alpha, \beta \leq 1$. It is worth checking first whether Inequality \eqref{eq:Bound} is broken for $\vec{e}$ defined by \eqref{eq:ConditionsE} with $\alpha,\beta=\pm1$ before numerically searching for $\vec{e}$ that fulfils the more general Condition \eqref{eq:GeneralConditionE}.\\

	With Lemma~\refsec[lem:Pauli] proven, the main statement of this appendix can be proved:

	\begin{theorem}
	    \label{the:Theorem1}
	    For separable quantum states $\int d\lambda~ p(\lambda) \bigotimes_{j=1}^M \hat{\rho}^{(j)}(\lambda)$ of $N$ atoms on $M$ lattice sites:
        \begin{equation}
            \label{eq:Theorem1}
            \int d\lambda~ p(\lambda) \prod_{j=1}^M \left| \Tr \left[ \hat{\rho}^{(j)}(\lambda) \hat{S}_{\vec{e}}^{(j)} \right] \right|^2 \leq \left(\frac{N}{2 M}\right)^{2M}.
        \end{equation}
	\end{theorem}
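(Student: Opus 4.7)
The plan is to combine Lemma~\refsec[lem:Pauli] with the arithmetic--geometric mean (AM--GM) inequality, applied pointwise in the hidden variable $\lambda$. Denote $b_j(\lambda) := \left| \Tr \left[ \hat{\rho}^{(j)}(\lambda)\, \hat{S}_{\vec{e}}^{(j)} \right] \right|$, so that the integrand on the left-hand side of \eqref{eq:Theorem1} reads $\prod_{j=1}^M b_j(\lambda)^2$. Because the numbers $b_j(\lambda)$ are non-negative, AM--GM gives $\prod_{j=1}^M b_j(\lambda) \leq \bigl( \tfrac{1}{M} \sum_{j=1}^M b_j(\lambda) \bigr)^M$, and squaring both sides preserves this inequality.

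Next I would invoke Lemma~\refsec[lem:Pauli] pointwise in $\lambda$. For each fixed $\lambda$, the tensor product $\bigotimes_{j=1}^M \hat{\rho}^{(j)}(\lambda)$ is itself a product state on the full lattice with $N$ atoms, so the lemma yields $\sum_{j=1}^M b_j(\lambda) \leq N/2$. Feeding this bound into the AM--GM estimate and squaring produces the pointwise estimate $\prod_{j=1}^M b_j(\lambda)^2 \leq \bigl( \tfrac{N}{2M} \bigr)^{2M}$.

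Since $p(\lambda) \geq 0$ and $\int d\lambda\, p(\lambda) = 1$, integrating a constant right-hand side preserves the bound, giving $\int d\lambda\, p(\lambda) \prod_{j=1}^M b_j(\lambda)^2 \leq \bigl( \tfrac{N}{2M} \bigr)^{2M}$, which is exactly \eqref{eq:Theorem1}.

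The only subtlety --- hardly a genuine obstacle --- is making sure that Lemma~\refsec[lem:Pauli] really applies to each tensor component $\bigotimes_{j=1}^M \hat{\rho}^{(j)}(\lambda)$ with the \emph{same} particle number $N$. This is justified by particle-number superselection: since the full separable mixture lives in the fixed-$N$ sector, each component of its convex decomposition may be taken to carry exactly $N$ atoms. Once this is granted, Theorem~\refsec[the:Theorem1] reduces to a short AM--GM argument layered on top of the lemma, with no further calculation required.
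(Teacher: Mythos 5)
Your proposal is correct and follows essentially the same route as the paper: reduce to a pointwise-in-$\lambda$ bound, apply the arithmetic--geometric mean inequality to the product of the single-site traces, and then invoke Lemma \ref{lem:Pauli} to bound the resulting sum by $N/2$ before integrating over $p(\lambda)$. Your additional remark about fixing the particle number $N$ in each component of the convex decomposition is a harmless clarification of an assumption the paper makes implicitly.
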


	\begin{proof}
	    Inequality~\eqref{eq:Theorem1} is proved for any separable state with well-defined total number of atoms and for operators $\hat{S}_{\vec{e}}^{(j)}$ defined in Subsection~\refsec[subs:Correlator] with $\vec{e}$ fulfilling \eqref{eq:GeneralConditionE}. Because the integral in \eqref{eq:Theorem1} is just an average over probability distribution $p(\lambda)$, Theorem~\refsec[the:Theorem1] holds true if for all $\lambda$
	    \begin{equation}
	    \label{eq:SufficientSquared}
	        \prod_{j=1}^M \left| \Tr \left[ \hat{\rho}^{(j)}(\lambda) \hat{S}_{\vec{e}}^{(j)} \right] \right|^2 \leq \left(\frac{N}{2 M}\right)^{2M}.
	    \end{equation}

	    Proving Inequality \eqref{eq:SufficientSquared} is then sufficient to prove Theorem~\refsec[the:Theorem1]. Proof of Inequality \eqref{eq:SufficientSquared} starts form noting that the $2M$\nth ~root of the left-hand side of \eqref{eq:SufficientSquared} can be seen as a geometric mean. Knowing that the geometric mean is no greater than the arithmetic mean\footnote{One can prove that for $M$ non-negative numbers $\{a_i\}$ their geometric mean is bounded from above by their arithmetic mean: $\sqrt[M]{\prod_{i=1}^M a_i} \leq \frac{1}{M} \sum_{i=1}^M a_i$. As in \cite{ineq}.}, one can write:
	    \begin{equation}
	        \label{eq:MeansInequality}
	        \sqrt[M]{\prod_{j=1}^M \left| \Tr \left[ \hat{\rho}^{(j)}(\lambda) \hat{S}_{\vec{e}}^{(j)} \right] \right| } \leq \frac{1}{M} \sum_{j=1}^M \left| \Tr \left[ \hat{\rho}^{(j)}(\lambda) \hat{S}_{\vec{e}}^{(j)} \right] \right|.
	    \end{equation}
	    Raising both sides of \eqref{eq:MeansInequality} to the power of $2M$, one gets a bound for the left-hand side of \eqref{eq:SufficientSquared}:
	    \begin{equation}
	        \prod_{j=1}^M \left| \Tr \left[ \hat{\rho}^{(j)}(\lambda) \hat{S}_{\vec{e}}^{(j)} \right] \right|^2 \leq \left( \frac{1}{M} \sum_{j=1}^M \left| \Tr \left[ \hat{\rho}^{(j)}(\lambda) \hat{S}_{\vec{e}}^{(j)} \right] \right| \right)^{2M}.
	    \end{equation}
	    Now Lemma~\refsec[lem:Pauli] (which holds true for all separable states with exactly $N$ atoms and for any $\vec{e}$ fulfilling \eqref{eq:GeneralConditionE}) can be applied to write
	    \begin{equation}
	        \left( \frac{1}{M} \sum_{j=1}^M \left| \Tr \left[ \hat{\rho}^{(j)}(\lambda) \hat{S}_{\vec{e}}^{(j)} \right] \right| \right)^{2M} \leq \left( \frac{1}{M} \cdot \frac{N}{2} \right)^{2M}.
	    \end{equation}

	    The above presented reasoning proves that Inequality \eqref{eq:SufficientSquared} holds true and by this that Inequality \eqref{eq:Theorem1} also holds true.	    
	\end{proof}

    \vspace{.5cm}
	In the special case, when $N=M$ Theorem~\refsec[the:Theorem1] is identical with Inequality~$\dgr[3]$~\eqref{eq:BoundInequalities}:
	\begin{equation}
	    \int d\lambda~ p(\lambda) \prod_{j=1}^M \left| \Tr \left[ \hat{\rho}^{(j)}(\lambda) \hat{S}_{\vec{e}}^{(j)} \right] \right|^2 \leq \left(\frac{1}{2}\right)^{2M}.
	\end{equation}
	It is worth noting here that the whole proof of Lemma~\refsec[lem:Pauli], and thus, the proof of Theorem~\refsec[the:Theorem1] are based on Condition \eqref{eq:GeneralConditionE} imposed on vector $\vec{e}$. That is why \eqref{eq:GeneralConditionE} is imposed on $\vec{e}$ when defining the $\hat{S}_{\vec{e}}^{(j)}$ operator \eqref{eq:SpinEOperator} and the correlator $\mathcal{C}_{\vec{e}}$ \eqref{eq:GeneralCorrelator} in Subsection~\refsec[subs:Correlator].

	\chapter{Maximum value of \Cesq}
	\label{app:MaxCorrelator}
	
	\begin{theorem}
	    \label{the:MaxC}
	    The modulus of the quantum correlator $\mathcal{C}_{\vec{e}}$ squared  calculated on a mixed state with $N = M$ atoms on $M$ lattice sites is bounded from above by $\sfrac{1}{4}$:
	    \begin{equation}
	        \label{eq:MaxCBound}
	        |\mathcal{C}_{\vec{e}}|^2 = \left|\braket{\bigotimes_{j=1}^M \hat{S}_{\vec{e}}^{(j)}}\right|^2 \leq \frac{1}{4}.
	    \end{equation}
	\end{theorem}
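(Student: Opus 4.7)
The starting point is a reduction to a finite-dimensional qubit Hilbert space. Since each of $\hat{S}_x^{(j)}$, $\hat{S}_y^{(j)}$, $\hat{S}_z^{(j)}$, and therefore $\hat{S}_{\vec{e}}^{(j)}$, both annihilates any single-site state containing zero particles and preserves the site-wise particle number, the operator $\bigotimes_{j=1}^M\hat{S}_{\vec{e}}^{(j)}$ has non-zero matrix elements only between Fock states in which every site is occupied by at least one particle. Because $N=M$, the only such configurations are those with exactly one particle per site, so $\mathcal{C}_{\vec{e}}$ depends only on the projection of $\hat{\rho}$ onto the one-per-site subspace $\mathcal{H}_1\simeq(\mathbb{C}^2)^{\otimes M}$, which is a positive semidefinite operator of trace $\le 1$ on $M$ qubits. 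On $\mathcal{H}_1$ the single-site operator takes the Pauli form $\hat{S}_{\vec{e}}^{(j)}|_{\mathcal{H}_1}=\tfrac{1}{2}\vec{e}\cdot\vec{\sigma}^{(j)}$, and by convexity it suffices to bound $|\bra{\psi}\bigotimes_{j=1}^M\tfrac{1}{2}\vec{e}\cdot\vec{\sigma}^{(j)}\ket{\psi}|$ for every unit $\ket{\psi}\in\mathcal{H}_1$, i.e.\ the numerical radius of the $M$-fold tensor product.

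The second step is a sharp single-qubit bound. For an arbitrary single-qubit pure state $\ket{\phi}$ the density matrix is $\tfrac{1}{2}(\mathbb{1}+\vec{n}_\phi\cdot\vec{\sigma})$ with real Bloch vector $|\vec{n}_\phi|=1$, so a direct calculation yields
\[
\bra{\phi}\tfrac{1}{2}\vec{e}\cdot\vec{\sigma}\ket{\phi}=\tfrac{1}{2}\vec{e}\cdot\vec{n}_\phi.
\]
Condition~\eqref{eq:VecEConditions} gives $|\vec{e}\cdot\vec{n}_\phi|\le 1$, so the single-qubit numerical radius is at most $\tfrac{1}{2}$. Writing $\vec{e}=\vec{a}+\I\vec{b}$ with $\vec{a},\vec{b}\in\mathbb{R}^3$, a short $2\times 2$ singular value calculation then gives the operator norm $\|\tfrac{1}{2}\vec{e}\cdot\vec{\sigma}\|=\tfrac{1}{2}(\sqrt{\mu_1}+\sqrt{\mu_2})\le 1$, where $\mu_1\ge\mu_2\ge 0$ are the eigenvalues of the real symmetric matrix $\vec{a}\vec{a}^{T}+\vec{b}\vec{b}^{T}$; Condition~\eqref{eq:VecEConditions} is equivalent to $\mu_1\le 1$.

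The final step is promoting the two single-qubit estimates to the full tensor product, and this is the main obstacle. The clean route is Ando's numerical-radius inequality $w(A\otimes B)\le\|A\|\,w(B)$, which iterated over the $M$ sites yields
\[
w\!\left(\bigotimes_{j=1}^M\tfrac{1}{2}\vec{e}\cdot\vec{\sigma}^{(j)}\right)\le\left\|\tfrac{1}{2}\vec{e}\cdot\vec{\sigma}\right\|^{M-1}w\!\left(\tfrac{1}{2}\vec{e}\cdot\vec{\sigma}\right)\le 1^{M-1}\cdot\tfrac{1}{2}=\tfrac{1}{2},
\]
and therefore $|\mathcal{C}_{\vec{e}}|^2\le 1/4$. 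The difficulty sits precisely here, since Ando's inequality is not elementary. If one prefers a self-contained argument, the alternative is to use the singular value decomposition $\hat{S}_{\vec{e}}=\sigma_1\ket{u_1}\!\bra{v_1}+\sigma_2\ket{u_2}\!\bra{v_2}$ with orthonormal $\{\ket{u_i}\}$ and $\{\ket{v_i}\}$, expand $\bigotimes_{j=1}^M\hat{S}_{\vec{e}}^{(j)}$ as a sum of $2^M$ rank-one tensor products, and control the resulting quadratic form in the Fock amplitudes of $\ket{\psi}$ by Cauchy--Schwarz together with the sharp identity $\sigma_1+\sigma_2=\sqrt{\mu_1}\le 1$. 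Either way the bound is saturated by the GHZ-type state~\eqref{eq:OurGHZ} with $\vec{e}$ chosen as in Section~\refsec[sec:CorrelatorChoice], which confirms its tightness.
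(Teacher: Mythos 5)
Your proposal is correct, but it takes a genuinely different route from the paper's. The paper argues in the Fock basis for the special operator $\hat{S}_+^{(j)}$: since for $N=M$ the product $\bigotimes_j \hat{a}^{{(j)}\dagger}\hat{b}^{(j)}$ connects only $\ket{0,\dots,0;1,\dots,1}$ to $\ket{1,\dots,1;0,\dots,0}$, the maximiser is an explicit two-component superposition and $\max|\mathcal{C}_+|^2=\sin^2\theta\cos^2\theta\le\tfrac14$; general $\vec e$ is then handled by asserting a unitary equivalence $\bigotimes_j\hat S_{\vec e}^{(j)}=\hat U_{\vec e}^\dagger\bigotimes_j\hat S_+^{(j)}\hat U_{\vec e}$. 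You share the first observation (reduction to the one-atom-per-site qubit subspace, justified by convexity for mixed states), but then bound the single-site numerical radius by $\tfrac12$ via the Bloch vector and Condition \eqref{eq:VecEConditions}, bound the single-site operator norm by $1$, and lift to the $M$-fold product with the Holbrook--Ando inequality $w(A\otimes B)\le\|A\|\,w(B)$. What this buys is uniformity in $\vec e$: your argument covers every $\vec e$ allowed by \eqref{eq:VecEConditions}, including real directions such as $\vec e=(1,0,0)$, for which $\hat S_{\vec e}^{(j)}=\hat S_x^{(j)}$ is Hermitian and hence not unitarily equivalent to the nilpotent $\hat S_+^{(j)}$, so the paper's final rotation step does not literally apply there (the bound still holds, with maximum $2^{-2M}$ rather than $\tfrac14$). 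The price is reliance on a non-elementary external theorem, which you flag; note, however, that the ``self-contained'' fallback you sketch does not yet close that gap: expanding in singular vectors and applying Cauchy--Schwarz with $\sigma_1+\sigma_2=\sqrt{\mu_1}\le1$ only controls the operator norm of the tensor product, giving a bound $\sigma_1^M\le 1$ (equal to $1$ in the key case $\vec e$ corresponding to $\hat S_+$), not $\tfrac12$; the difference between norm and numerical radius is precisely the content of the tensor inequality, so the elementary alternative would need a different idea (e.g.\ Ando's block-matrix characterisation of $w\le1$, or the paper's explicit two-state maximisation after a rotation when one is available). With the tensor inequality taken as known, your proof is complete, and its treatment of general $\vec e$ is logically tighter than the paper's.
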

	
	\begin{proof}
	    The proof starts from considering the $\mathcal{C}_{+}:=\braket{\bigotimes_{j=1}^M \hat{S}_{+}^{(j)}}$ correlator:
	    \begin{equation}
	        \mathcal{C}_{+} = \Tr \left[ \hat{\rho} \bigotimes_{j=1}^M \hat{S}_{+}^{(j)} \right] = \sum_\psi p_\psi \braket{\psi|\bigotimes_{j=1}^M \hat{a}^{{(j)}^\dagger} \hat{b}^{(j)}|\psi},
	    \end{equation}
	    with vectors $\ket{\psi}$ and probabilities $p_\psi$ constituting the density matrix of the mixed state $\hat{\rho}$ describing the whole system, i.e. $\hat{\rho} = \sum_\psi p_\psi \ketbra{\psi}{\psi}$. The $\ket{\psi}$ vectors can be represented in the Fock basis $\mathcal{F}$:
	    \begin{equation}
	        \label{eq:CorrelatorPlusInFockBasis}
	        \sum_\psi p_\psi \braket{\psi|\bigotimes_{j=1}^M \hat{a}^{{(j)}^\dagger} \hat{b}^{(j)}|\psi} = \sum_\psi p_\psi \sum_{\alpha, \alpha' \in \mathcal{F}} c_\alpha^\psi c_{\alpha'}^{\psi^*} \braket{\alpha'|\bigotimes_{j=1}^M \hat{a}^{{(j)}^\dagger} \hat{b}^{(j)}|\alpha},
	    \end{equation}
	    where $\alpha, \alpha'$ address states of the Fock basis and $c_\alpha^\psi$ are coefficients of decomposition of the $\ket{\psi}$ state in the Fock basis, i.e. $c_\alpha^\psi=\braket{\alpha|\psi}$.
	    
	    Now the state $\hat{\rho}$ giving the maximal value of $\mathcal{C}_{+}$ will be found and this maximal value will be calculated. First, one can note that the action of the operator $\bigotimes_{j=1}^M \hat{a}^{{(j)}^\dagger} \hat{b}^{(j)}$ on the state $\ket{\alpha}$ gives non-zero vector only when $\ket{\alpha}=\ket{0,...,0;1,...,1}$. This is because with $N = M$ all the other states have zero particles of $B$-type in some lattice site $j$ and the correlator acting on them returns zero. The resulting vector is
	    \begin{equation}
	        \bigotimes_{j=1}^M \hat{a}^{{(j)}^\dagger} \hat{b}^{(j)} \ket{0,...,0;1,...,1} = \ket{1,...,1;0,...,0}.
	    \end{equation}
	    It is clear now that the only non-zero terms in the expression on the left-hand side of Equation \eqref{eq:CorrelatorPlusInFockBasis} are those with $\ket{\psi}$ containing $\ket{0,...,0;1,...,1}$ and $\ket{1,...,1;0,...,0}$ states. Therefore, the state $\hat{\rho}$ giving the maximal value of $|\braket{\bigotimes_{j=1}^M \hat{S}_{+}^{(j)}}|^2$ should be a pure state of form:
	    \begin{equation}
	        \hat{\rho}^* = \ketbra{\theta,\varphi},
	    \end{equation}
	    with
	    \begin{equation}
	        \ket{\theta,\varphi} := \sin \theta \ket{0,...,0;1,...,1} + e^{\I \varphi} \cos \theta \ket{1,...,1;0,...,0},
	    \end{equation}
	    where $\theta \in \left[ 0, 2 \pi \right[$ and $\varphi \in \left[ 0, 2 \pi \right[$.
	    The value of $|\mathcal{C}_{+}|^2$ can be calculated explicitly for this state:
	    \begin{equation}
	        \left| \braket{\theta,\varphi|\bigotimes_{j=1}^M \hat{S}_{+}^{(j)}|\theta,\varphi} \right|^2 = \left| \sin \theta \braket{\theta,\varphi|1,...,1;0,...,0} \right|^2 = \sin^2 \theta~ \cos^2 \theta.
	    \end{equation}
	    This value is maximised for $\sin^2 \theta = \frac{1}{2} = \cos^2 \theta$. Therefore, the maximum value of $|\mathcal{C}_{+}|^2$ is $\sfrac{1}{4}$:
	    \begin{equation}
	        \label{eq:CplusMax}
	        \max_{\hat{\rho}} \left| \mathcal{C}_{+} \right|^2 = \max_{\hat{\rho}} \left| \Tr \left[ \hat{\rho} \bigotimes_{j=1}^M \hat{S}_{+}^{(j)} \right] \right|^2 = \max_{\hat{\theta, \varphi}} \left| \Tr \left[ \hat{\rho}^* \bigotimes_{j=1}^M \hat{S}_{+}^{(j)} \right] \right|^2 = \frac{1}{4}.
	    \end{equation}
	    This way Inequality \eqref{eq:MaxCBound} is proved for the special case $\mathcal{C}_{\vec{e}}=\mathcal{C}_{+}$.
	    In the following, this result is generalised for any $\vec{e}$. First, one can note that there exists such unitary operator $\hat{U}_{\vec{e}}$ that:
	    \begin{equation}
	        \bigotimes_{j=1}^M \hat{S}_{\vec{e}}^{(j)} = \hat{U}_{\vec{e}}^\dagger \bigotimes_{j=1}^M \hat{S}_{+}^{(j)} \hat{U}_{\vec{e}}.
	    \end{equation}
	    The value of \Cesq on some state $\hat{\rho}$ can be then expressed by $\left| \mathcal{C}_{+} \right|^2$ calculated on another state $\hat{\rho}':=\hat{U}_{\vec{e}} \hat{\rho}\, \hat{U}_{\vec{e}}^\dagger$ so that
	    \begin{equation*}
	        \max_{\hat{\rho}} \left| \mathcal{C}_{\vec{e}} \right|^2 = \max_{\hat{\rho}} \left| \Tr \left[ \hat{\rho}~ \hat{U}_{\vec{e}}^\dagger \bigotimes_{j=1}^M \hat{S}_{+}^{(j)} \hat{U}_{\vec{e}} \right] \right|^2 = \max_{\hat{\rho}} \left| \Tr \left[ \hat{U}_{\vec{e}} \hat{\rho}\, \hat{U}_{\vec{e}}^\dagger \bigotimes_{j=1}^M \hat{S}_{+}^{(j)} \right] \right|^2 =
	    \end{equation*}
	    \begin{equation}
	        = \max_{\hat{\rho}'} \left| \Tr \left[ \hat{\rho}' \bigotimes_{j=1}^M \hat{S}_{+}^{(j)} \right] \right|^2 = \frac{1}{4}
	    \end{equation}
	    according to \eqref{eq:CplusMax}. Inequality~\eqref{eq:MaxCBound} must hold true, since, as shown above, the maximum value of \Cesq is 
	    $\sfrac{1}{4}$.
	    
	\end{proof}

	\addtocontents{toc}{\protect\enlargethispage{\baselineskip}}
	\chapter{Numerical program} 
	\label{app:NumericalProgram}
		
	A single simulation consisted of two stages. In the first stage, a program written in \verb|C| was creating operators' matrices (particularly the matrix of the components of the Hamiltonian \eqref{Hamiltonian}) and computing the initial state by solving the eigenproblem of the Hamiltonian. In the second stage, another program written in \verb|Python|, was first applying $\pi/2$-pulse and then realising the numerical evolution of the state and calculating values of observables during the evolution.
	
	The input arguments of the first program were the numbers of particles and sites ($N$ and $M$) and the parameters of the lattice. First, the program generated the Fock space for given $N$ and $M$. In the created Fock space, matrices representing the Hamiltonian and other observables were obtained. Then the eigenproblem for the Hamiltonian's matrix has been solved to find the initial ground state. The matrices of the operators as well as the ground state were sent as the input for the second program.
	
	The second program, basing on the output from the first one, was rotating the initial state by $\pi/2$-pulse and evolving it. Every time step of the evolution consisted of calculation of the Hamiltonian for a given moment of time and Runge-Kutta evolution according to Schr\"odinger equation. Every few time steps, the values of interesting observables were calculated to get an insight into the evolution.

	\addtocontents{toc}{\protect\enlargethispage{\baselineskip}}
	\chapter{Evolution for $N=2$: analytical results} \label{app:Neq2}

	In this appendix a semi-analytical method used to describe the dynamics of the system with $N=M=2$ is presented. It is a result of \cite{paper}.

	The Bose-Hubbard Hamiltonian (\ref{Hamiltonian}) can be considered in the Fock state basis. This thesis uses the following notation for the elements of the Fock basis for $N=M=2$:
	\begin{align}\label{eq:Fsbasis}
	\ket{1}&=\ket{2,0;0,0}, \, 
	\ket{2}=\ket{1,1;0,0}, \,
	\ket{3}=\ket{0,2;0,0},\nonumber \\
	\ket{4}&=\ket{1,0;1,0},\,
	\ket{5}=\ket{1,0;0,1},\,
	\ket{6}=\ket{0,1;1,0},\,
	\ket{7}=\ket{0,1;0,1},\nonumber \\
	\ket{8}&=\ket{0,0;2,0},\,
	\ket{9}=\ket{0,0;1,1},\,
	\ket{10}=\ket{0,0;0,2}.
	\end{align}
	For instance, using this notation, one can write: $\hat{b}^\dagger_2 \hat{b}_1 \ket{8}=\sqrt{2} \ket{9}$.
	The matrix of the Hamiltonian takes block diagonal form in the basis: 
	$\hat{{\cal H}}_{BH}=\text{diag}\{ \hat{\mathcal{H}}_{AA}, \hat{\mathcal{H}}_{AB}, \hat{{\cal H}}_{BB} \}$,
	with representation of $\hat{{\cal H}}_{AB}$ taking the form
	\begin{equation}
	\hat{{\cal H}}_{AB} \doteq \left(
	\begin{array}{cccc}
	U_{AB} & -2 J & -2 J & 0 \\
	-2 J & 0 & 0 & -2 J  \\
	-2 J & 0 & 0 & -2 J \\
	0 & -2 J & -2 J & U_{AB} \\
	\end{array}
	\right)
	\end{equation}
	in basis $\{|4\rangle, |5\rangle, |6\rangle, |7\rangle \}$ and representations of $\hat{{\cal H}}_{AA}$, $\hat{{\cal H}}_{BB}$ in bases $\{\ket{1}, \ket{2}, \ket{3} \}$, $\{\ket{8}, \ket{9}, \ket{10} \}$ respectively, taking the form 
	\begin{equation}\label{eq:app:Hss}
	\hat{{\cal H}}_{AA} \doteq
	\left(
	\begin{array}{ccc}
	U_{\sigma\sigma} & -2 \sqrt{2} J & 0 \\
	-2 \sqrt{2} J & 0 & -2 \sqrt{2} J \\
	0 & -2 \sqrt{2} J & U_{\sigma\sigma} \\
	\end{array}
	\right) \doteq  \hat{{\cal H}}_{BB}.
	\end{equation}

    The initial spin-coherent state $\ket{\text{GS}^+}$ \eqref{GS+} is defined as the ground state with condition of all atoms being in the $A$ state, rotated by the $e^{-i \frac{\pi}{2} \hat{S}_y}$ operator:
    \begin{equation}
        \ket{\text{GS}^+} = e^{-i \frac{\pi}{2} \hat{S}_y} \ket{\text{GS}^A}.
    \end{equation}
    As all the particles in $\ket{\text{GS}^A}$ are in the $A$ state, it is enough to consider the $\hat{\mathcal{H}}_{AA}$ part of the Hamiltonian only. The eigenvalues of this operator are 
	$E_1=\Omega_{AA-}/2$, $E_2=U_{AA}$, $E_3=\Omega_{AA+}/2$,
	where $\Omega_{AA\pm}=U_{AA}\pm \sqrt{64J^2+U_{AA}^2}$ and the corresponding eigenstates
	\begin{align}
	|\Psi_1\rangle &= \frac{1}{\mathcal{N}_1}\left( |1\rangle + \frac{\Omega_{AA+}}{4\sqrt{2}J} |2\rangle +  |3\rangle\right), \\
	|\Psi_2\rangle &= \frac{1}{\sqrt{2}}\left( -|1\rangle + |3\rangle\right), \\
	|\Psi_3\rangle &= \frac{1}{\mathcal{N}_3}\left( |1\rangle + \frac{\Omega_{AA-}}{4\sqrt{2}J} |2\rangle +  |3\rangle\right),
	\end{align}
	with $\mathcal{N}_1^2=2+\Omega_{AA+}^2/(32J^2)$ and $\mathcal{N}_3^2=2+\Omega_{AA-}^2/(32J^2)$.
	The $\ket{\text{GS}^A}$ state is then
	\begin{equation}
	\ket{\text{GS}^A}=\ket{\Psi_1}=\sum_{w=1}^{10} c^{A}_w \ket{w}
	\end{equation}
	where $\vec{c}^{A}=(1,\frac{\Omega_{AA=}}{4\sqrt{2}J},1,0,0,0,0,0,0,0)^{T}/\mathcal{N}_1$, as $E_1$ is the lowest eigenvalue.
	
	The $e^{- i\hat{S}_y \pi/2}$ operator in the basis \eqref{eq:Fsbasis} takes the form
	{\footnotesize 
		\begin{equation}
		\left(
		\begin{array}{cccccccccc}
		\frac{1}{2} & 0 & 0 & -\frac{1}{\sqrt{2}} & 0 & 0 & 0 & \frac{1}{2} & 0 & 0 \\
		0 & \frac{1}{2} & 0 & 0 & -\frac{1}{2} & -\frac{1}{2} & 0 & 0 & \frac{1}{2} & 0 \\
		0 & 0 & \frac{1}{2} & 0 & 0 & 0 & -\frac{1}{\sqrt{2}} & 0 & 0 & \frac{1}{2} \\
		\frac{1}{\sqrt{2}} & 0 & 0 & 0 & 0 & 0 & 0 & -\frac{1}{\sqrt{2}} & 0 & 0 \\
		0 & \frac{1}{2} & 0 & 0 & \frac{1}{2} & -\frac{1}{2} & 0 & 0 & -\frac{1}{2} & 0 \\
		0 & \frac{1}{2} & 0 & 0 & -\frac{1}{2} & \frac{1}{2} & 0 & 0 & -\frac{1}{2} & 0 \\
		0 & 0 & \frac{1}{\sqrt{2}} & 0 & 0 & 0 & 0 & 0 & 0 & -\frac{1}{\sqrt{2}} \\
		\frac{1}{2} & 0 & 0 & \frac{1}{\sqrt{2}} & 0 & 0 & 0 & \frac{1}{2} & 0 & 0 \\
		0 & \frac{1}{2} & 0 & 0 & \frac{1}{2} & \frac{1}{2} & 0 & 0 & \frac{1}{2} & 0 \\
		0 & 0 & \frac{1}{2} & 0 & 0 & 0 & \frac{1}{\sqrt{2}} & 0 & 0 & \frac{1}{2}
		\end{array}
		\right)
		\end{equation}
	}
	and the spin-coherent state is
	\begin{equation}\label{eq:Sh}
	\ket{\Psi(0)} = \ket{\text{GS}^+} = e^{- i\hat{S}_y \pi/2} \ket{\text{GS}^A} = \sum_w c_w(0) \ket{w},
	\end{equation}
	where
	\begin{equation}
	\vec{c}_w(0)=(2\sqrt{2}, \frac{a}{2}, 2\sqrt{2},4,\frac{a}{2},\frac{a}{2},4,2\sqrt{2},\frac{a}{2},2\sqrt{2} )^{T}\frac{1}{\sqrt{8^2 + a^2}}
	\end{equation}
	and $a=\frac{\Omega_{AA+}}{J}$.
	If the Hamiltonian $\hat{{\cal H}}_{BH}$ is constant, the solution to the Schr\"odinger equation $i \hbar \partial_t \ket{\Psi(t)} = \hat{\mathcal{H}}_{BH} \ket{\Psi(t)}$ is:
	\begin{equation}\label{eq:Shro}
	\ket{\Psi(t)}=e^{-i t \hat{\mathcal{H}}_{BH}/\hbar} \ket{\Psi(0)}.
	\end{equation}
	In the general case, when the Hamiltonian depends on time, the evolution of state can be calculated numerically.
	
	As the time-independent case (\ref{eq:Shro}) can be solved analytically it is still quite cumbersome in general. However, assuming the symmetric case with $U_{AA}=U_{BB}$ the result simplifies to:
	\begin{equation}
	\ket{\Psi(t)} = \sum_w c_w(t) |w\rangle,
	\end{equation}
	where
	\begin{align}
	&c_1(t)=\frac{2Je^{-it\Omega_{AA-}/2}}{\sqrt{64J^2+U_{AA}\Omega_{AA+}}},\nonumber \\
	&c_2(t)=\frac{\sqrt{\Omega_{AA+}}e^{-it\Omega_{AA-}/2}}{2\sqrt{2\omega_{AA}}}, \nonumber \\
	&c_4(t)=\frac{2Je^{-it\Omega_{AB+}/2}
		\left[ \Omega_{AB+} - \Omega_{AA+} + e^{i t \omega_{AB}}(\Omega_{AA+} - \Omega_{AB-})\right]}{\sqrt{\omega^2_{AB}(64J^2 + \Omega_{AA+}^2)}} ,\nonumber \\
	&c_5(t)=\frac{e^{-itU_{AB}/2}}{2\sqrt{2 \omega^2_{AB} (64J^2 + U_{AA}\Omega_{AA+})}} \times \nonumber \\
	&\times
	\left[ \Omega_{AA+} \omega_{AB}{\rm cos}(t\omega_{AB}/2)
	+i (64J^2+\Omega_{AA+}U_{AB}){\rm sin}(t\omega_{AB}/2) \right] , \label{eq:c1-c5}
	\end{align}
	and $c_3(t)=c_8(t)=c_{10}(t)=c_1(t)$, $c_6(t)=c_5(t)$, $c_7(t)=c_4(t)$, $c_9(t)=c_2(t)$, with $\omega_{\sigma\sigma'} = \sqrt{64J^2 + U_{\sigma\sigma'}}$ and $\Omega_{\sigma\sigma' \pm}=U_{\sigma\sigma'} \pm \omega_{\sigma\sigma'}$. The $c_6-c_{10}$ coefficients can be expressed in terms of $c_1-c_5$ due to the symmetry in respect of exchange of the type of the particle, resulting from the equality of interaction coefficients $U_{AA}=U_{BB}$. If $U_{AA} \neq U_{BB}$ those coefficients differ and take more complex forms. It is worth mentioning, that in the limit of $J\to 0$ the coefficients $c_1, c_3, c_4, c_7, c_8, c_{10}$ tend to zero, thus only the states with exactly one atom in each lattice site contribute to the state $\ket{\Psi}$. It holds also when the non-symmetric case is considered.

	\bibliographystyle{unsrt}
	\bibliography{bibliografia}
	
\end{document}